\title{\fontsize{18pt}{27pt}\selectfont
	{ 
		Study on Dynamical Behavior of Coinfection 
		\\Infectious Disease Model}}
\author{\fontsize{12pt}{18pt}\selectfont
	{
		Liu Yang}\\
	\fontsize{10.5pt}{15.75pt}\selectfont
	}
\date{}
\newtheorem{theorem}{Theorem}
\newtheorem{definition}{Definition}
\newtheorem{remark}{Remark}
\begin{document}
	\maketitle
	\lhead{}
	\chead{}
	\rhead{}
	\lfoot{}
	\cfoot{\thepage}
	\rfoot{}
	
	\begin{center}
		{\textbf{Abstract}}
	\end{center}
	\begin{adjustwidth}{1.06cm}{1.06cm}
		\hspace{1.5em}This paper conducts research on the established model and presents the
		main conclusions .  Firstly, by separately considering the infectivity of each of the two
		infectious diseases and the infectivity of the population simultaneously infected with the
		two infectious diseases, the existence of three types of boundary equilibrium points is
		determined, as well as the existence of the interior equilibrium point when the parameters
		are under specific conditions. Then, the stability of the equilibrium points is analyzed.
		It is concluded that under different parameter conditions, the stability of the disease
		free equilibrium point can exhibit various scenarios, such as a stable node or a saddle- node, etc. For the boundary equilibrium points, the situation is more intricate, and a
		cusp may occur. The stability of the interior equilibrium point under specific conditions
		is also presented. Finally, the degeneracy of the equilibrium points is studied through
		the bifurcation theory. Mainly, the saddle- node bifurcation occurring at the interior
		equilibrium point is obtained, and when the infection rate of the first infectious disease,
		the infection rate of the second infectious disease, and the infection rate of the co- infected
		population to other populations are selected as bifurcation parameters, a codimension- 3
		B- Tbifurcation is obtained. 
	\end{adjustwidth}
	\begin{adjustwidth}{1.06cm}{1.06cm}
		\fontsize{10.5pt}{15.75pt}\selectfont{\textbf{Keyword:} Infectious disease model;Coinfection;Equilibrium;Saddle-node bifurcation;
				B-T bifurcation}\\
	\end{adjustwidth}
	
	\section{Introduction}
	In Chapter 3, the SIS model was discussed and a mixed infection model was introduced. The mentioned mixed infection model ignores the situation where the diseased population simultaneously suffers from other infectious diseases. It mainly takes into account the infectious capabilities of different infectious diseases, and provides the basis for determining whether a single infectious disease will eventually become prevalent or not through the basic reproduction number. In this chapter, based on the SIS model, two different types of infectious diseases are considered. For the two types of infectious diseases that exhibit different transmission mechanisms, the situation is considered where the susceptible population, the population infected with the first type of infectious disease, the population infected with the second type of infectious disease, and the population that has contact with those infected with both infectious diseases become the population suffering from both infectious diseases simultaneously. And the phenomenon of whether the two infectious diseases will eventually form a common transmission is studied. The following model is constructed: 
	\begin{align}
		\label{SIS-Origin}
		\left\{\begin{aligned}
			\frac{\mathrm{d} S}{\mathrm{d} t} =& b - \alpha _{1}SI_{1}- \alpha _{2}SI_{2}-\alpha SI_{m}-bS+\theta_{1}I_1 \\
			\frac{\mathrm{d} I_{1} }{\mathrm{d} t } =&\alpha _{1}SI_{1}-\alpha_{2}I_{1}I_{2}-bI_{1}-\theta_{1}I_{1}-\alpha I_1I_{m}\\
			\frac{\mathrm{d} I_{2} }{\mathrm{d} t }=&\alpha _{2}SI_{2}-\alpha_{1}I_{1}I_{2}-bI_{2}-\alpha I_2I_{m}\\
			\frac{\mathrm{d} I_{m}}{\mathrm{d} t}=& (\alpha_1+\alpha_2)I_{1}I_{2}+\alpha I_{m}(S+I_1+I_2)-bI_{m}
		\end{aligned}\right.
	\end{align}
	Among them, $S$ represents the susceptible population; $I_1$ represents the population infected with the first type of infectious disease, and it is assumed that the first type of infectious disease cannot be completely cured, that is, the diseased population has the risk of being reinfected after being cured; $I_2$ represents the population infected with the second type of infectious disease; $I_{m}$ represents the population suffering from both infectious diseases simultaneously; $b$ represents the birth rate of the population, and to simplify the model, the population mortality rate is also set as $b$; $\alpha_i > 0, i = 1,2$ represents the infection rate of the $i$-th type of infectious disease; $\theta_{1} > 0$ represents the cure rate of the population infected with the first type of infectious disease; and the parameter $\alpha > 0$ represents the infection rate of the population suffering from both infectious diseases simultaneously to other populations.
	
	For the convenience of the research, the model is first simplified. Let $N(t)=S(t)+I_1(t)+I_2(t)+I_m(t)$ denote the total population density. By adding up the four equations in the system $\eqref{SIS-Origin}$, we can obtain
	\begin{align*}
		\frac{\mathrm{d}N}{\mathrm{d}t}=\frac{\mathrm{d} S}{\mathrm{d} t}+\frac{\mathrm{d} I_1}{\mathrm{d} t}+\frac{\mathrm{d} I_2}{\mathrm{d} t}+\frac{\mathrm{d} I_{m}}{\mathrm{d} t}=0
	\end{align*}
	Therefore, we can only discuss the system composed of the last three equations:
	\begin{align}
		\label{SIS-Second}
		\left\{\begin{aligned}
			\frac{\mathrm{d} I_{1} }{\mathrm{d} t } =&\alpha _{1}SI_{1}-\alpha_{2}I_{1}I_{2}-bI_{1}-\theta_{1}I_{1}-\alpha I_1I_{m}\\
			\frac{\mathrm{d} I_{2} }{\mathrm{d} t }=&\alpha _{2}SI_{2}-\alpha_{1}I_{1}I_{2}-bI_{2}-\alpha I_2I_{m}\\
			\frac{\mathrm{d} I_{m}}{\mathrm{d} t}=& (\alpha_1+\alpha_2)I_{1}I_{2}+\alpha I_{m}(S+I_1+I_2)-bI_{m}
		\end{aligned}\right.
	\end{align}
	
Then, let $S, I_1, I_2, I_{m}$ represent the densities of the corresponding populations relative to the total population $N$, and we can obtain $S + I_1+I_2 + I_{m}\equiv1$. Let $S = 1 - I_1 - I_2 - I_{m}$, and by making a variable substitution for the system (\ref{SIS-Second}) and a transformation of the time parameter $\tau = bt$, we can obtain the following system (here, for the convenience of representation, the time parameter is still denoted by $t$): 
	\begin{align}
		\label{SIS-True}
		\left\{\begin{aligned}
			\frac{\mathrm{d} I_{1} }{\mathrm{d} t } =&(a _{1}-r_1)I_{1}-a_1I_1^2-(a_{1}+a_{2})I_{1}I_{2}-(a_1+k)I_1I_{m}\\
			\frac{\mathrm{d} I_{2} }{\mathrm{d} t }=&(a _{2}-1)I_{2}-a_2I_2^2-(a_{1}+a_{2})I_{1}I_{2}-(a_2+k)I_2I_{m}\\
			\frac{\mathrm{d} I_{m}}{\mathrm{d} t}=& (a_1+a_2)I_{1}I_{2}+(k-1)I_{m}-kI_{m}^2
		\end{aligned}\right.
	\end{align}
	Wherein$a_i=\frac{\alpha_i}{b},i=1,2$,$k=\frac{\alpha}{b}$,$r_1=1+\frac{\theta_1}{b}$.
	
	In the subsequent content, the existence, stability and bifurcation of the equilibrium points will be studied. For the proofs in the subsequent content, we record here:
	\begin{align*}
		&u_1=a_1k+a_2^2,\\
		&u_2=a_2k+a_1^2,\\
		&u=a_1^2+a_2^2+a_1a_2,\\
		&v_1=a_1+a_2-a_2r_1-a_2^2,\\
		&v_2=a_1r_1+a_2r_1-a_1-a_1^2.
	\end{align*}
	\section{Existence and stability of equilibrium points}
	The study of equilibrium points plays a crucial role in the control and treatment of infectious diseases. Analyzing the existence of equilibrium points allows for a more straightforward and intuitive observation of the changes in the system. Analyzing the stability of equilibrium points enables us to determine the development trend of the prevalence of infectious diseases, helping us to analyze the relationship between the prevalence of infectious diseases and various factors. Based on this, effective measures can be taken to curb the spread of infectious diseases. To find the equilibrium points of the system (\ref{SIS-True}) is to solve the following system of ternary quadratic equations: 
	\begin{align}
		\label{rightEquation4SIS}
		\left\{\begin{aligned}
			&(a _{1}-r_1)I_{1}-a_1I_1^2-(a_{1}+a_{2})I_{1}I_{2}-(a_1+k)I_1I_{m}=0\\
			&(a _{2}-1)I_{2}-a_2I_2^2-(a_{1}+a_{2})I_{1}I_{2}-(a_2+k)I_2I_{m}=0\\
			&(a_1+a_2)I_{1}I_{2}+(k-1)I_{m}-kI_{m}^2=0
		\end{aligned}\right.
	\end{align}
	
	From the introduction of the two models in Chapter 3, it can be seen that for complex models, the basic reproduction number can still well represent the transmission ability of infectious diseases. Through the analysis of the model (\ref{SIS-True}), it is known that there always exists a disease-free equilibrium point $E_0=(0,0,0)$ in the system.
	
	By citing the definition of the basic reproduction number in the reference \cite{van2002}, and simultaneously considering the population flow situations of the three types of infected populations $I_1, I_2, I_m$, its reproduction matrix is constructed. The model is expressed as $\frac{\mathrm{d}X}{\mathrm{d}t}=\mathcal{F}(X)-\mathcal{V}(X)$, where $X=(I_1,I_2,I_m) \in \mathbb{R}^3$, and 
	\begin{align*}
		\mathcal{F}(X)&=\begin{pmatrix}
			a_1(1-I_1-I_2-I_m)I_1\\
			a_2(1-I_1-I_2-I_m)I_2\\
			k(1-I_m)I_m
		\end{pmatrix}   \\ \mathcal{V}(X)&=\begin{pmatrix}
			I_1(r_1+a_2I_2+kI_m)\\
			I_2(1+a_1I_1+kI_m)\\
			I_m[1-(a_1+a_2)I_1I_2]
		\end{pmatrix} 
	\end{align*}
	then the Jacobian matrices of $\mathcal{F}(X)$ and $\mathcal{V}(X)$ at the disease-free equilibrium point $E_0$ are: 
	\begin{align*}
		F=\frac{\partial \mathcal{F}}{\partial X} \Big|_{E_{0}}= \begin{pmatrix}
			a_1& 0 & 0\\
			0& a_2 & 0\\
			0& 0 & k
		\end{pmatrix}
		\qquad 
		V=\frac{\partial \mathcal{V}}{\partial X} \Big|_{E_{0}}= \begin{pmatrix}
			r_1& 0 & 0\\
			0& 1 & 0\\
			0& 0 & 1
		\end{pmatrix}
	\end{align*}
	so$FV^{-1}=\begin{pmatrix}
		\frac{a_1}{r_1} & 0 & 0\\
		0& a_2 & 0\\
		0& 0 & k
	\end{pmatrix}$,and$\rho(FV^{-1})=\max \{ \frac{a_1}{r_1},a_2,k\}$, the basic reproduction number of the system can be obtained.
	\begin{definition}
		The basic reproduction number of system(\ref{SIS-True}) is $R_0=\max \{ \frac{a_1}{r_1},a_2,k\}$.
	\end{definition}
	
	Here, without loss of generality, assume that the boundary equilibrium point corresponding to the first type of infectious disease appears first (the situation where the boundary equilibrium point corresponding to the second type of infectious disease appears first is similar to the discussion content below). Define the invasion reproduction number to represent the relative infectious ability of other infectious diseases when this boundary equilibrium point exists. According to the definition in the reference {\cite{gao2016coinfection}}, the expression of the invasion reproduction number of the second type of infectious disease for the equilibrium point of the first type of infectious disease is given by the reproduction matrix method: 
	\begin{align*}
		\mathcal{F}_2(I_2,I_m)&=\begin{pmatrix}
			\alpha_2SI_2\\
			( \alpha _1+\alpha_2)I_1I_2+\alpha I_m(S+I_1+I_2)
		\end{pmatrix},\\
		\mathcal{V}_2(I_2,I_m)&=\begin{pmatrix}
			\alpha_1I_1I_2+bI_2+\alpha I_2I_m \\
			bI_m
		\end{pmatrix}.
	\end{align*}
	and
	\begin{gather*}
		F_2=D\mathcal{F}_2(0,0)=\begin{pmatrix}
			\frac{a_2r_1}{a_1}  & 0\\
			\frac{(a_1+a_2)r_1}{a_1} & k
		\end{pmatrix},\\
		V_2=D\mathcal{V}_2(0,0)=\begin{pmatrix}
			a_1-r_1+1 & 0\\
			0 & 1
		\end{pmatrix}.
	\end{gather*}
	so
	\begin{align*}
		F_2V_2^{-1}=\begin{pmatrix}
			\frac{a_2r_1}{a_1(a_1-r_1+1)}  & 0\\
			\frac{(a_1+a_2)r_1}{a_1(a_1-r_1+1)} & k
		\end{pmatrix}.
	\end{align*}
	we can get that $\rho(F_2V_2^{-1})=\max\{\frac{a_2r_1}{a_1(a_1-r_1+1)},k \} $.
	\begin{definition}
		When the boundary equilibrium point corresponding to the first type of infectious disease appears first, the invasion reproduction number of the system is $R_2=\max\{\frac{a_2r_1}{a_1(a_1 - r_1 + 1)},k\}$.
	\end{definition}
	
	\begin{theorem}
		\label{existE123}
		(1) Regardless of the values of the parameters, the system always has a disease-free equilibrium point $E_0=(0,0,0)$;\\
		(2) When $R_{0,1}>1$, the system has a boundary equilibrium point $E_1 = (\frac{a_1 - r_1}{a_1},0,0)$;\\
		(3) When $R_{0,2}>1$, the system has a boundary equilibrium point $E_2=(0,\frac{a_2 - 1}{a_2},0)$;\\
		(4) When $k>1$, the system has a boundary equilibrium point $E_3=(0,0,\frac{k - 1}{k})$.
	\end{theorem}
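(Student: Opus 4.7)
The plan is to verify each equilibrium directly by substituting the proposed coordinates into system (\ref{rightEquation4SIS}) and checking that the positivity condition on the non-zero coordinate translates into the stated reproduction-number condition. Since each of $E_0, E_1, E_2, E_3$ lies on a coordinate axis (all but at most one component vanishes), the cubic system collapses to a single quadratic equation in the surviving variable, so no heavy algebra is needed.

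First I would dispatch $E_0$: plugging $I_1 = I_2 = I_m = 0$ into each right-hand side of (\ref{rightEquation4SIS}) gives $0$ identically, so $E_0$ is always an equilibrium regardless of parameter values. Next, for $E_1$ I would set $I_2 = I_m = 0$; the second and third equations are then automatically satisfied, and the first equation reduces to $I_1\bigl[(a_1 - r_1) - a_1 I_1\bigr] = 0$, whose nontrivial root is $I_1 = (a_1 - r_1)/a_1$. This root is strictly positive precisely when $a_1 > r_1$, equivalently $a_1/r_1 > 1$, which is the condition $R_{0,1} > 1$ (reading $R_{0,1}$ as the component $a_1/r_1$ of $R_0 = \max\{a_1/r_1, a_2, k\}$). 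An entirely analogous computation handles $E_2$ (set $I_1 = I_m = 0$ and solve $I_2[(a_2 - 1) - a_2 I_2] = 0$, giving $R_{0,2} = a_2 > 1$) and $E_3$ (set $I_1 = I_2 = 0$ and solve $I_m[(k - 1) - k I_m] = 0$, giving $k > 1$).

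There is essentially no obstacle: the only mild subtlety is the bookkeeping of identifying the appropriate component of $R_0$ with each $R_{0,i}$, and verifying that in each case the two equations associated with the vanishing coordinates are consistent with setting those coordinates to zero. Since every cross term in (\ref{rightEquation4SIS}) involves a product of at least two of the $I_j$'s, this consistency is automatic. After the four substitutions, the four claims follow, completing the proof.
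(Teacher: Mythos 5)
Your proposal is correct and follows essentially the same route as the paper: substitute the candidate point into (\ref{rightEquation4SIS}), observe that the equations for the vanishing components hold automatically since every cross term contains at least two of the $I_j$'s, and reduce the remaining equation to $(a_i-\cdot)-a_i I_i=0$ whose positive root exists exactly under the stated reproduction-number condition. No substantive difference from the paper's argument.
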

	\begin{proof}
		Assume that the boundary equilibrium point corresponding to the first type of infectious disease has the form $E_1=(I_1,0,0)$ where $I_1\neq0$. Then, substituting it into the system of equations (\ref{rightEquation4SIS}), the last two equations satisfy that both the left and right sides are 0. For the first equation, we have:
		\begin{align*}
			(a_1 - r_1)-a_1I_1 = 0,
		\end{align*}
		Therefore, when $R_{0,1}>1$, we can obtain $I_1=\frac{a_1 - r_1}{a_1}>0$. That is, at this time, the system has a boundary equilibrium point $E_1 = (\frac{a_1 - r_1}{a_1},0,0)$. When $R_{0,2}>1$ and $k>1$, the corresponding boundary equilibrium points $E_2$ and $E_3$ can be obtained in a similar way.
	\end{proof}
	\begin{theorem}
		\label{criticalPointMix}
		Consider two types of special boundary equilibrium points: \\
		(1) When $k > 1$ and $a_1>r_1k+(k - 1)k$, the system has a boundary equilibrium point $E_{1m}=(\frac{a_1 - k^2-(r_1 - 1)k}{a_1k},0,\frac{k - 1}{k})$; \\
		(2) When $k > 1$ and $a_2>k^2$, the system has a boundary equilibrium point $E_{2m}=(0,\frac{a_2 - k^2}{a_2k},\frac{k - 1}{k})$.
	\end{theorem}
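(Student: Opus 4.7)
The plan is to mimic the argument used for \(E_1, E_2, E_3\) in Theorem \ref{existE123}, but now allowing two coordinates of the candidate equilibrium to be nonzero while keeping the remaining one at zero. For part (1), I would set \(I_2=0\) and look for solutions of (\ref{rightEquation4SIS}) of the form \((I_1, 0, I_m)\) with \(I_1, I_m > 0\); for part (2), by symmetry, I would set \(I_1=0\) and look for solutions \((0, I_2, I_m)\) with \(I_2, I_m > 0\).

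For (1), with \(I_2=0\), the second equation of (\ref{rightEquation4SIS}) is automatically satisfied, and the third equation reduces to \((k-1)I_m - k I_m^2 = 0\). Factoring out \(I_m\neq 0\) gives \(I_m = (k-1)/k\), which is positive precisely when \(k>1\); this already pins down the third coordinate. Dividing the first equation by \(I_1\neq 0\) then leaves the single linear relation
\begin{align*}
    (a_1 - r_1) - a_1 I_1 - (a_1+k)\,\frac{k-1}{k} = 0,
\end{align*}
from which I would solve for \(I_1\) and simplify to obtain \(I_1 = \bigl(a_1 - k^2 - (r_1-1)k\bigr)/(a_1 k)\). The condition \(I_1>0\) is then equivalent to \(a_1 > k^2 + (r_1-1)k = r_1 k + (k-1)k\), which together with \(k>1\) gives exactly the hypothesis in the statement.

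For (2), the argument is completely analogous: with \(I_1=0\), the first equation is trivially satisfied, the third equation again forces \(I_m = (k-1)/k\) under \(k>1\), and dividing the second equation by \(I_2\neq 0\) yields a linear equation in \(I_2\) whose unique solution simplifies to \(I_2 = (a_2-k^2)/(a_2 k)\). Positivity of this coordinate is equivalent to \(a_2 > k^2\), matching the stated hypothesis.

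The whole argument is essentially a direct substitution, so I do not anticipate a conceptual obstacle. The only care required is in the algebraic simplification of \(k(a_1-r_1) - (a_1+k)(k-1)\) to \(a_1 - k^2 - (r_1-1)k\) (and the analogous collapse \(k(a_2-1) - (a_2+k)(k-1) = a_2 - k^2\)); a small sign or factor slip here would produce an incorrect threshold, so I would double-check those two identities before translating the positivity requirement into the stated parameter inequalities.
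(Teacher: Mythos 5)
Your proposal is correct and follows essentially the same route as the paper: set one infected class to zero, factor the third equation to get $I_m=(k-1)/k$, and solve the remaining linear equation for the nonzero coordinate, with the stated parameter inequalities exactly encoding positivity. The only (inessential) difference is that the paper additionally observes that the case $I_m=0$ with $I_1I_2\neq 0$ is impossible, a point not needed for the existence claims themselves.
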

	\begin{proof}
		For the system of equations (\ref{rightEquation4SIS}), consider the special boundary equilibrium points. When $I_m = 0$ and $I_1I_2\neq0$, it is obvious that the third equation of the system of equations is not satisfied. Therefore, there is no such type of boundary equilibrium point;
		
		When $I_2 = 0$ and $I_1I_m\neq0$, consider the following system of equations:
		\begin{align}
			\left\{\begin{aligned}
				&(a _{1}-r_1)-a_1I_1-(a_1 + k)I_{m}=0\\
				&(k - 1)-kI_{m}=0
			\end{aligned}\right.
		\end{align}
		When $k>1$ and $a_1>r_1k+(k - 1)k$, by solving the equations, we can obtain
		\begin{align*}
			I_1=\frac{a_1 - k^2-(r_1 - 1)k}{a_1k},\quad I_m=\frac{k - 1}{k}.
		\end{align*}
		Therefore, at this time, the boundary equilibrium point $E_{1m}$ exists.
		
		When $I_1 = 0$ and $I_2I_m\neq0$, consider the following system of equations:
		\begin{align}
			\left\{\begin{aligned}
				&(a _{2}-1)-a_2I_2-(a_2 + k)I_{m}=0\\
				&(k - 1)-kI_{m}=0
			\end{aligned}\right.
		\end{align}
		When $k>1$ and $a_2>k^2$, by solving the equations, we can obtain
		\begin{align*}
			I_2=\frac{a_2 - k^2}{a_2k},\quad I_m=\frac{k - 1}{k}.
		\end{align*}
		Therefore, at this time, the boundary equilibrium point $E_{2m}$ exists.
	\end{proof}
	\begin{remark}
		By comparing Theorem \ref{criticalPointMix} and Theorem \ref{existE123}, it can be found that when $k > 1$, the component values corresponding to $I_m$ of the boundary equilibrium points $E_3$, $E_{1m}$ and $E_{2m}$ are equal. At this time, if it is desired that the values of the boundary equilibrium points on the $I_1$ or $I_2$ components are not zero, then the threshold of the infection rate of the corresponding infectious disease is higher. That is, the corresponding infectious disease needs to exhibit a stronger infectious ability at this time. Therefore, on the premise that the total population density remains unchanged, the spread of co-infection will also show a certain competitive relationship with the spread of the first type of infectious disease or the second type of infectious disease.
	\end{remark}
	\begin{theorem}
		\label{thmNopoint}
		When the basic reproduction number $R_0 < 1$, there are no other equilibrium points in the system.
	\end{theorem}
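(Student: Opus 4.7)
The plan is to show that when $R_0 = \max\{a_1/r_1,\,a_2,\,k\} < 1$, the three inequalities $a_1 < r_1$, $a_2 < 1$, and $k < 1$ hold simultaneously, and then to use these to rule out every possible equilibrium beyond $E_0$ by classifying candidates according to which coordinates vanish.

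First I would dispose of the boundary equilibria already classified. Theorem \ref{existE123} and Theorem \ref{criticalPointMix} characterize all equilibria with exactly one or two nonzero components of the form $E_1, E_2, E_3, E_{1m}, E_{2m}$, and each of these requires one of the conditions $a_1 > r_1$, $a_2 > 1$, or $k > 1$ to hold. Under $R_0 < 1$ all of these fail, so none of those equilibria exist. The only case not treated there is $I_m = 0$ with $I_1 I_2 > 0$, which is eliminated directly from the third equation of \eqref{rightEquation4SIS}: it reduces to $(a_1+a_2)I_1 I_2 = 0$, contradicting $I_1 I_2 > 0$.

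The main step is then to rule out equilibria with $I_1 > 0$ or $I_2 > 0$ in general (in particular, the interior case). Suppose an equilibrium has $I_1 > 0$. Dividing the first equation of \eqref{rightEquation4SIS} by $I_1$ gives
\begin{align*}
a_1 I_1 + (a_1+a_2) I_2 + (a_1+k) I_m \;=\; a_1 - r_1.
\end{align*}
Since $I_1, I_2, I_m \geq 0$, the left-hand side is nonnegative, while $R_0 < 1$ forces $a_1 - r_1 < 0$, a contradiction. The analogous manipulation of the second equation handles $I_2 > 0$, yielding $(a_1+a_2) I_1 + a_2 I_2 + (a_2+k) I_m = a_2 - 1 < 0$.

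It remains to treat the case $I_1 = I_2 = 0$. The third equation collapses to $I_m\bigl((k-1) - k I_m\bigr) = 0$, whose nonzero root $I_m = (k-1)/k$ is nonpositive when $k < 1$; hence $I_m = 0$ and the only equilibrium is $E_0$. The whole argument hinges on the sign reversal produced by $a_i < r_i$ after cancelling the positive factor $I_i$, so there is no genuine obstacle — the single subtlety to watch is ensuring the classification of cases (which coordinates vanish) is exhaustive, which is handled by invoking the two preceding theorems together with the direct check of the $I_m = 0,\, I_1 I_2 > 0$ case.
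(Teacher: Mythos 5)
Your proof is correct, and it takes a genuinely simpler route than the paper's. The paper dismisses the boundary equilibria by inspection and then attacks the interior case by solving the first two equilibrium equations for $I_1,I_2$ in terms of $I_m$ via Cramer's rule, obtaining $I_1=-(u_1I_m+v_1)/u$ and $I_2=-(u_2I_m+v_2)/u$, and then running a case analysis on the signs of $v_1$ and $v_2$ (aided by a figure of the curves $v_1=0$, $v_2=0$) to show some coordinate must be negative. Your argument bypasses all of that: after cancelling the positive factor $I_i$, the $i$-th equation reads $a_iI_i+(a_1+a_2)I_j+(a_i+k)I_m=a_i-r_i$ with a nonnegative left side and a strictly negative right side under $R_0<1$, which kills every equilibrium with $I_1>0$ or $I_2>0$ in one stroke — boundary and interior alike — leaving only the $I_1=I_2=0$ case, where $k<1$ forces $I_m=0$. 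This is cleaner, needs no auxiliary quantities $u,u_1,u_2,v_1,v_2$, and your three-way case split ($I_1>0$, $I_2>0$, $I_1=I_2=0$) is manifestly exhaustive; your opening paragraph invoking Theorems \ref{existE123} and \ref{criticalPointMix} is in fact redundant given the main step, but harmless. The one implicit hypothesis, shared with the paper, is that admissible equilibria lie in the nonnegative orthant, which is the standard convention for this population model.
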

	\begin{proof}
		When the basic reproduction number $R_0 < 1$, we have $a_1 < r_1$, $a_2 < 1$, and $k < 1$. Obviously, the boundary equilibrium points $E_1$, $E_2$, $E_3$, $E_{1m}$, and $E_{2m}$ do not exist. Next, we consider the existence of the interior equilibrium point when $R_0 < 1$.
		
		For the equation (\ref{rightEquation4SIS}), when $I_1$, $I_2$, and $I_m$ are all non - zero, we solve it and get:
		\begin{align}
			\label{eq4fixpoints}
			\left\{\begin{aligned}
				&a_{1}-r_1-a_1I_1-(a_{1}+a_{2})I_{2}-(a_1 + k)I_{m}=0\\
				&a_{2}-1-a_2I_2-(a_{1}+a_{2})I_{1}-(a_2 + k)I_{m}=0\\
				&(a_1 + a_2)I_{1}I_{2}+(k - 1)I_{m}-kI_{m}^2=0
			\end{aligned}\right.
		\end{align}
		Taking $I_m$ as a parameter, the system of linear equations in two variables about $I_1$ and $I_2$ is
		\begin{align*}
			\left\{\begin{aligned}
				&a_1I_1+(a_{1}+a_{2})I_{2}=a_{1}-r_1-(a_1 + k)I_{m}\\
				&(a_{1}+a_{2})I_{1}+a_2I_2=a_{2}-1-(a_2 + k)I_{m}\\
			\end{aligned}\right.
		\end{align*}
		We solve the above system of linear equations in two variables by Cramer's rule. We get
		\begin{align}
			\label{I1}
			I_1=-\frac{u_1I_m + v_1}{u},\\
			\label{I2}
			I_2=-\frac{u_2I_m + v_2}{u}.
		\end{align}
		
		Next, we explain the relationship between the signs of $v_1$ and $v_2$ and the existence of the interior equilibrium point. For
		\begin{align*}
			v_1=a_1 + a_2-a_2r_1-a_2^2,\quad
			v_2=a_1r_1 + a_2r_1-a_1-a_1^2.
		\end{align*}
		
		When $v_1 = 0$ and $v_2 = 0$, we have
		\begin{gather*}
			a_1=a_2^2+(r_1 - 1)a_2,\quad
			a_2=\frac{1}{r_1}[a_1^2+(1 - r_1)a_1].
		\end{gather*}
		Regarding $r_1$ as a parameter and $a_1$, $a_2$ as variables, the graphs of the two curves corresponding to $v_1 = 0$ and $v_2 = 0$ in the two - dimensional plane are as follows:
		\begin{figure}[H]
			\label{fig4a1a2}
			\centering
			\includegraphics[scale=0.6]{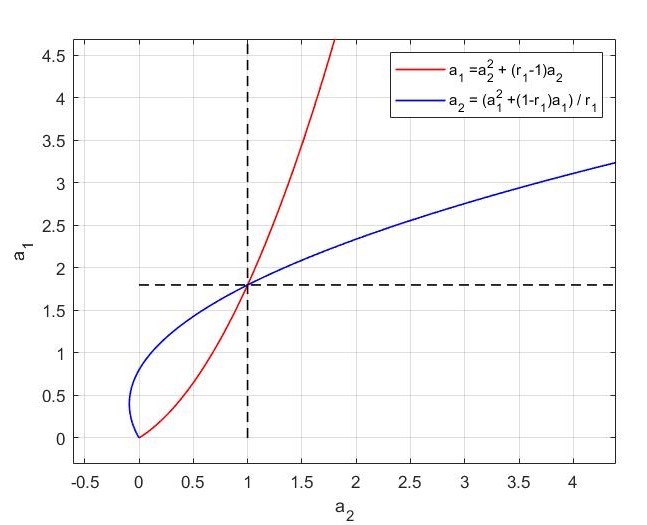}
			\caption{The graph of parameter $a_1$ with respect to $a_2$}
			\label{figure}
		\end{figure}
		Then, from the graph, we can see that the two curves intersect at the point $(r_1,1)$. When $R_{0,1}<1$, there are only the following cases:
		
		(1) $v_1\geq0, v_2\geq0$, (2) $v_1 > 0, v_2<0$, (3) $v_1<0, v_2>0$.
		
		Let $v_1 = 0$. Suppose that there is a positive solution $I_m^*$ for the equation (\ref{eq4Im}) at this time. Substituting it into the expression of $I_1$ with respect to $I_m$ (\ref{I1}), we get $I_1^*=-\frac{u_1I_m^*}{u}$. Also, it is obvious that the parameters $u_1>0, u>0$, then $I_1^*<0$. Therefore, $I_m^*$ is not a solution corresponding to the interior equilibrium point.
		Then, consider the case when there is a positive solution $I_m^*$ when $v_1>0$. Substituting it into the formula (\ref{I1}), we can also get $I_1 = -\frac{u_1I_m^*+v_1}{u}<0$ in the same way. Therefore, when $v_1\geq0$, the system (\ref{SIS-True}) has no interior equilibrium point. Similarly, when $v_2\geq0$, the system also has no interior equilibrium point.
		
		Based on the above discussion, it is obvious that when $R_{0,1}<1$, the system has no interior equilibrium point. Similarly, when $R_{0,2}<1$, the system also has no interior equilibrium point. Furthermore, when the basic reproduction number $R_0 < 1$, the system has no interior equilibrium point at all.
	\end{proof}
	\begin{remark}
		According to the theorem, when the basic reproduction numbers $R_{0,j}<1,j = 1,2$ corresponding to various types of infectious diseases, that is, when there are neither the boundary equilibrium point corresponding to the first type of infectious disease nor the boundary equilibrium point corresponding to the second type of infectious disease in the system, there is also no interior equilibrium point in the system. That is, when neither of the two infectious diseases can be prevalent, the situation of co-infection prevalence cannot occur either.
	\end{remark}
	\noindent Note:
	\begin{align*}
		k_0=\frac{1}{2}\left( a_1+a_2+\frac{a_1a_2}{a_1+a_2}-\sqrt{a_1^2+a_2^2+\frac{a_1^2a_2^2}{(a_1+a_2)^2}}  \right) \\ k_1=\frac{1}{2}\left( a_1+a_2+\frac{a_1a_2}{a_1+a_2}+\sqrt{a_1^2+a_2^2+\frac{a_1^2a_2^2}{(a_1+a_2)^2}}\right) 
	\end{align*}
	\begin{theorem}\label{exist4E4E5}
		When $a_1>r_1$, $a_2 > 1$ and $a_1\in\left(\frac{1 - r_1}{2}+\sqrt{a_2r_1+\frac{(1 - r_1)^2}{4}},a_2^2+(r_1 - 1)a_2\right)$,\\
		(1) If $k = k_0\leq1$, then the system has an interior equilibrium point $E_4$ at this time;\\
		(2) If $k\in(k_0,k_1)$, then the system has an interior equilibrium point $E_5$ at this time.
	\end{theorem}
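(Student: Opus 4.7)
The plan is to reduce the interior-equilibrium system \eqref{eq4fixpoints} to a single quadratic equation in $I_m$ and then track its positive roots. Cramer's rule on the first two equations, as already carried out in the proof of Theorem \ref{thmNopoint}, gives $I_1=-(u_1I_m+v_1)/u$ and $I_2=-(u_2I_m+v_2)/u$. Substituting these into the third equation yields
\begin{align*}
A(k)I_m^{2}+B(k,r_1)I_m+C(r_1)=0,
\end{align*}
with $A(k)=(a_1+a_2)u_1u_2-u^{2}k$ and $C=(a_1+a_2)v_1v_2$. An interior equilibrium corresponds to a root $I_m^{*}>0$ that additionally satisfies $u_jI_m^{*}+v_j<0$ for $j=1,2$, so that $I_1$ and $I_2$ are positive.

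First I translate the parameter hypotheses into sign information. The upper bound $a_1<a_2^{2}+(r_1-1)a_2$ rewrites directly as $v_1<0$, and the stated lower bound on $a_1$ together with $a_2>1$ forces $v_2<0$; hence $C=(a_1+a_2)v_1v_2>0$. The assumptions $a_1>r_1$ and $a_2>1$ also guarantee $u_1,u_2>0$ and make the thresholds $-v_j/u_j$ strictly positive, so there is room for a valid $I_m^{*}$.

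The crucial algebraic step is the factorization of $A(k)$. Viewed as a quadratic in $k$ with positive leading term $a_1a_2(a_1+a_2)$, a direct computation (writing $p=a_1+a_2$, $q=a_1a_2$ and simplifying $(a_1+a_2)u_1u_2-u^{2}k$) produces
\begin{align*}
A(k)=a_1a_2\bigl[(a_1+a_2)k-a_1a_2\bigr]\bigl[k-(a_1+a_2)\bigr].
\end{align*}
Its roots are therefore $a_1a_2/(a_1+a_2)$ and $a_1+a_2$, and the identity $s^{2}-4q=t$ (with $s,t$ the sum and the radicand appearing inside the definitions of $k_0,k_1$) confirms that these agree precisely with $k_0$ and $k_1$. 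Consequently $A(k_0)=A(k_1)=0$, $A(k)<0$ on $(k_0,k_1)$, and $A(k)>0$ outside $[k_0,k_1]$.

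With this in hand, part (1) is the degenerate case: at $k=k_0$ the quadratic collapses to $BI_m+C=0$, producing the single candidate $I_m^{*}=-C/B$; after checking $B(k_0,r_1)<0$ and the two threshold inequalities, this point becomes $E_4$. For part (2), on the open interval $(k_0,k_1)$ we have $A<0$ and $C>0$, so the roots of the quadratic have product $C/A<0$ and exactly one of them, $I_m^{*}$, is positive; verifying the same threshold inequalities then delivers $E_5$. The main obstacle is this final verification of $u_jI_m^{*}+v_j<0$ for $j=1,2$, since $I_m^{*}$ is an intricate rational expression in the parameters; the argument should combine the explicit factorization of $A(k)$ with the strict negativity of $v_1,v_2$ and a monotonicity estimate for the positive root as $k$ moves away from $k_0$ into $(k_0,k_1)$.
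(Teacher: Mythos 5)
Your proposal follows the same route as the paper's proof: Cramer's rule reduces the interior-equilibrium conditions to the quadratic \eqref{eq4Im} in $I_m$, the hypothesis on $a_1$ gives $v_1<0$, $v_2<0$ and hence a positive constant term, and the sign of the leading coefficient is controlled by locating the roots $k_0,k_1$ of $ku^2=u_1u_2(a_1+a_2)$. Your one genuine departure is the factorization
\begin{align*}
(a_1+a_2)u_1u_2-u^2k=a_1a_2\bigl[(a_1+a_2)k-a_1a_2\bigr]\bigl[k-(a_1+a_2)\bigr],
\end{align*}
and it is correct: at $k=a_1+a_2$ one has $u_1=u_2=u$, at $k=\frac{a_1a_2}{a_1+a_2}$ one has $u_1=\frac{a_2u}{a_1+a_2}$ and $u_2=\frac{a_1u}{a_1+a_2}$, so both values annihilate $(a_1+a_2)u_1u_2-u^2k$, and the leading and constant coefficients match. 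This yields $k_0=\frac{a_1a_2}{a_1+a_2}$ and $k_1=a_1+a_2$ in closed form (their sum and difference reproduce the radical expressions stated before the theorem) and replaces the paper's discriminant computation; that is a worthwhile simplification within an otherwise identical argument.

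The gap is the step you yourself flag but do not carry out: verifying $u_jI_m^*+v_j<0$ for $j=1,2$, i.e.\ that the positive root of \eqref{eq4Im} actually gives $I_1>0$ and $I_2>0$. (The paper silently omits this as well, so you are not behind it, but a complete proof needs it.) For part (1) it closes easily: the quadratic evaluated at the threshold $m_j=-v_j/u_j>0$ equals $(k-1)m_j-km_j^2$, which is negative when $k\le 1$, while the value at $0$ is $c>0$; since the equation is linear decreasing there ($a=0$, $b<0$), its unique positive root lies below both thresholds. For part (2) this is not automatic, because $k$ may exceed $1$ on $(k_0,k_1)$ (indeed $k_1=a_1+a_2>r_1+1$), and then $(k-1)m_j-km_j^2$ can be positive when $m_j<\frac{k-1}{k}$, in which case the positive root would exceed $m_j$ and the corresponding $I_j$ would be negative. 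So an appeal to ``a monotonicity estimate'' is not yet a proof; this is the one place where the hypotheses must be used beyond $v_1,v_2<0$, and it needs an explicit argument.
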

	\begin{proof}
		Substitute the expressions of $I_1$ and $I_2$ in terms of $I_m$ into the third equation of the system (\ref{eq4fixpoints}), we can get:
		\begin{align} 
			\label{eq4Im}
			aI_m^2 + bI_m + c = 0
		\end{align}
		where
		\begin{align*}
			&a=\frac{u_1u_2(a_1 + a_2)}{u^2}-k,\\
			&b=k - 1+\frac{(a_1 + a_2)(u_1v_2+u_2v_1)}{u^2},\\
			&c=\frac{v_1v_2(a_1 + a_2)}{u^2}.
		\end{align*}
		
		For the univariate quadratic equation (\ref{eq4Im}), consider the positive roots of the equation when $ac<0$.
		
		From the proof content of Theorem (\ref{thmNopoint}), it is known that the system may have an interior equilibrium point only when $R_0>1$ and $v_1<0$, $v_2<0$. Then when $a_1\in\left(\frac{1 - r_1}{2}+\sqrt{a_2r_1+\frac{(1 - r_1)^2}{4}},a_2^2+(r_1 - 1)a_2\right)$, we have $v_1<0$, $v_2<0$, $v_1v_2>0$. Therefore, the coefficient of the constant term $c>0$.
		
		Then discuss the sign of the coefficient of the quadratic term $a$. Substitute the expressions of $u_1$ and $u_2$ in terms of $a_1$, $a_2$ and $k$ into $ku^2=u_1u_2(a_1 + a_2)$, we can get:
		\begin{align*}
			\label{eq4k}
			a_1a_2k^2+(a_1^3+a_2^3-\frac{u^2}{a_1 + a_2})k+a_1^2a_2^2 = 0 
		\end{align*}
		Discuss the properties of the univariate quadratic equation about $k$:
		\begin{align}
			\begin{split}
				\Delta=&\left(a_1^3+a_2^3-\frac{u^2}{a_1 + a_2}\right)^2-4a_1^3a_2^3\\
				=&(a_1^3+a_2^3)^2-2(a_1^2-a_1a_2+a_2^2)[(a_1 + a_2)^4-2a_1a_2(a_1 + a_2)^2+a_1^2a_2^2]\\
				&+(a_1 + a_2)^6-4(a_1 + a_2)^4a_1a_2+6(a_1 + a_2)^2a_1^2a_2^2 - 4a_1^3a_2^3-4a_1^3a_2^3\\
				&+\frac{a_1^4a_2^4}{(a_1 + a_2)^2}\\
				=&a_1^4a_2^2+a_1^2a_2^4+\frac{a_1^4a_2^4}{(a_1 + a_2)^2}>0
			\end{split}
		\end{align}
		Solve the univariate quadratic equation about $k$ to get the zero solutions $k_0$ and $k_1$.
		
		(1) When $k = k_0\leq1$, the coefficient of the quadratic term $a = 0$, the coefficient $b<0$ and $c>0$. Then the equation (\ref{eq4Im}) has a positive root $I_m=-\frac{c}{b}$. Substitute it back into the expressions of $I_m$ in equations \eqref{I1} and \eqref{I2} to get the interior equilibrium point $E_4$.
		
		(2) When $k\in(k_0,k_1)$, the coefficient $a<0$. At this time, the discriminant of the roots of the equation \eqref{eq4Im} $b^2 - 4ac>0$. Therefore, the equation has a positive root $I_m=\frac{-b-\sqrt{b^2 - 4ac}}{2a}$. Substitute it back into the expressions of $I_m$ in equations \eqref{I1} and \eqref{I2} to get the interior equilibrium point $E_5$.
	\end{proof}
	
	Consider the stability of the boundary equilibrium points and the interior equilibrium points of the model. The Jacobian matrix of the right-hand side equations of the system (\ref{SIS-True}) at the equilibrium point $(I_1, I_2, I_m)$ is: 
	\begin{align*}
		J=\begin{pmatrix}
			h_1(I_1,I_2,I_m)& -(a_1+a_2)I_1 & -(a_1+k)I_1\\
			-(a_1+a_2)I_2& h_2(I_1,I_2,I_m) & -(a_2+k)I_2\\
			(a_1+a_2)I_2& (a_1+a_2)I_1 &k-1-2kI_m
		\end{pmatrix}
	\end{align*}
	wherein:
	\begin{align*}
		h_1(I_1,I_2,I_m)&=a_1-r_1-2a_1I_1-(a_1+a_2)I_2-(a_1+k)I_m\\
		h_2(I_1,I_2,I_m)&=a_2-1-2a_2I_2-(a_1+a_2)I_1-(a_2+k)I_m
	\end{align*}
	\begin{theorem}
		For the disease-free equilibrium point $E_0$,\\
		(1) When $R_0 < 1$, this equilibrium point is a stable node;\\
		(2) When $R_0>1$, here we assume that $a_1 > r_1$,\\
		(i) If $a_2 < 1$ and $k < 1$, then the disease-free equilibrium point $E_0$ is a saddle point;\\
		(ii) If $a_2 = 1$ and $k\neq1$ or $k = 1$ and $a_2\neq1$, the disease-free equilibrium point $E_0$ is a saddle-node;\\
		(iii) If $a_2>1$ and $k > 1$, then the disease-free equilibrium point $E_0$ is an unstable node;
	\end{theorem}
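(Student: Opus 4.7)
The plan is to exploit the special structure of the linearization at $E_0$: since $I_1 = I_2 = I_m = 0$ makes every nonlinear term in the right-hand side vanish, the Jacobian reduces to the diagonal matrix
\begin{align*}
J(E_0) = \begin{pmatrix} a_1 - r_1 & 0 & 0 \\ 0 & a_2 - 1 & 0 \\ 0 & 0 & k-1 \end{pmatrix},
\end{align*}
so the three eigenvalues are read off as $\lambda_1 = a_1-r_1$, $\lambda_2 = a_2-1$, $\lambda_3 = k-1$, with eigenvectors aligned with the coordinate axes. From this, parts (1), (2)(i) and (2)(iii) are immediate by the Hartman--Grobman theorem: in (1), all three eigenvalues are strictly negative since $R_0<1$ forces $a_1/r_1<1$, $a_2<1$, $k<1$; in (2)(i) we get signs $+,-,-$, giving a hyperbolic saddle; in (2)(iii) all signs are $+$, yielding an unstable node.

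The substantive work lies entirely in case (2)(ii), where exactly one eigenvalue vanishes and a center manifold reduction is required. I would focus on the subcase $a_2 = 1$, $k \neq 1$, $a_1 > r_1$ (the other subcase is symmetric). Here the center eigenspace is spanned by the $I_2$-axis, while the $I_1$-axis is unstable ($\lambda_1>0$) and the $I_m$-axis is either stable or unstable depending on whether $k<1$ or $k>1$. By the center manifold theorem, there exists a locally invariant one-dimensional manifold of the form $I_1 = \phi_1(I_2)$, $I_m = \phi_m(I_2)$ with $\phi_1(0)=\phi_m(0)=0$ and $\phi_1'(0)=\phi_m'(0)=0$. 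Since the equations for $\dot I_1$ and $\dot I_m$ have no terms linear in $I_2$ alone (every cross-term contains an extra factor of $I_1$ or $I_m$), a standard order-counting argument gives $\phi_1(I_2), \phi_m(I_2) = O(I_2^2)$.

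Substituting this expansion into $\dot I_2 = (a_2-1)I_2 - a_2 I_2^2 - (a_1+a_2)I_1 I_2 - (a_2+k)I_2 I_m$ and using $a_2=1$ yields the reduced flow on the center manifold
\begin{align*}
\dot I_2 = -\,I_2^{\,2} + O(I_2^{\,3}).
\end{align*}
The quadratic coefficient $-1$ is nonzero, which is precisely the non-degeneracy condition characterizing a saddle-node on the center manifold; combined with the transverse hyperbolic direction(s), this makes $E_0$ a (topological) saddle-node of the full three-dimensional system, as claimed. The analogous computation with $k=1$, $a_2\neq 1$ produces a reduced equation $\dot I_m = -\,I_m^{\,2} + O(I_m^{\,3})$ on the center manifold along the $I_m$-axis.

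The main obstacle I anticipate is the bookkeeping in this center-manifold step, specifically justifying $\phi_1,\phi_m = O(I_2^2)$ cleanly and ruling out any hidden linear-in-$I_2$ contribution from the quasilinear terms; the rest is a sign-tracking exercise. Everything else in the theorem reduces to reading off the diagonal of $J(E_0)$.
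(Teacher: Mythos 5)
Your proposal follows essentially the same route as the paper: read the three eigenvalues $a_1-r_1$, $a_2-1$, $k-1$ off the diagonal Jacobian to settle the hyperbolic cases, and in the degenerate case use a center manifold reduction with $I_1, I_m = O(I_2^2)$ to obtain $\dot I_2 = -I_2^2 + O(I_2^3)$ and conclude a saddle-node. The paper's argument is the same, only stated more tersely; your explicit treatment of the $k=1$ subcase and the order-counting justification fill in details the paper leaves implicit.
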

	\begin{proof}
		For the disease-free equilibrium point $E_0=(0,0,0)$, the corresponding Jacobian matrix is:
		\begin{align*}
			J_0=\begin{pmatrix}
				a_1 - r_1&0&0\\
				0&a_2 - 1&0\\
				0&0&k - 1
			\end{pmatrix}
		\end{align*}
		The corresponding eigenvalues are $\lambda_{1}=a_1 - r_1$, $\lambda_{2}=a_2 - 1$, $\lambda_{3}=k - 1$.
		
		(1) According to the reference \cite{guckenheimer2013nonlinear}, when $R_0 < 1$, the disease-free equilibrium point $E_0$ is a stable node; and from Theorem \ref{thmNopoint}, there are no other equilibrium points at this time. Therefore, the disease-free equilibrium point $E_0$ is globally stable at this time;
		
		(2) When $a_1>r_1$ and $a_2 < 1$, $k < 1$, the eigenvalues have different signs, and the equilibrium point is a saddle point; when $a_1>r_1$ and $a_2 = 1$, $k\neq1$, we have $\lambda_{1}=a_1 - r_1>0$, $\lambda_{2}=a_2 - 1 = 0$, $\lambda_{3}=k - 1\neq0$. At this time, it is obvious that the disease-free equilibrium point is a Lyapunov-type singular point .
		
		At this time, by the center manifold theorem, for $I_1$ and $I_m$, we have $I_1 = O(I_2^2)$ and $I_m = O(I_2^2)$. Substituting it back into the equation of the original system about $I_2$ gives:
		\begin{align*}
			\dot{I}_2=-I_2^2 + O(I_2^3),
		\end{align*}
		Therefore, the disease-free equilibrium point is a saddle-node at this time.
		
		When $a_1>r_1$, $a_2>1$ and $k > 1$, it is obvious that the disease-free equilibrium point $E_0$ is an unstable node.
	\end{proof}
	
	It can be seen from the theorem that when the disease infection rate is less than the corresponding treatment rate, the disease-free equilibrium point is globally stable. Therefore, for infectious diseases, it is necessary to achieve timely control and keep the infection rate at a low level. At the same time, it is also necessary to strengthen the prevention of infectious diseases, improve the response capabilities of hospitals in various places, and avoid sudden outbreaks.
	\begin{theorem}\label{stablethmE1}
		For the boundary equilibrium point $E_1$,\\
		(1) When the invasion reproduction number $R_2 < 1$, the equilibrium point $E_1$ is a locally asymptotically stable node;\\
		(2) If $a_2=\frac{a_1(a_1 - r_1 + 1)}{r_1}$ and $k\neq1$, or $k = 1$ and $a_2\neq\frac{a_1(a_1 - r_1 + 1)}{r_1}$, then the equilibrium point $E_1$ is a saddle-node;\\
		(3) If $a_2=\frac{a_1(a_1 - r_1 + 1)}{r_1}$ and $k = 1$, then the equilibrium point $E_1$ is a cusp of codimension 3.
	\end{theorem}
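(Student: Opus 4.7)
The plan is to exploit the block-triangular structure of the Jacobian at $E_1$. Substituting $E_1=\bigl(\tfrac{a_1-r_1}{a_1},0,0\bigr)$ into the general matrix $J$ kills the two subdiagonal entries of the first column (they both carry the factor $I_2=0$), so $r_1-a_1$ is an eigenvalue; the remaining $2\times 2$ block on rows and columns two and three is itself lower-triangular, since its $(2,3)$-entry is $-(a_2+k)I_2=0$, with diagonal entries that simplify to $v_2/a_1$ and $k-1$. Thus $J_1$ has real eigenvalues $r_1-a_1$, $v_2/a_1$ and $k-1$. A short algebraic manipulation shows $\tfrac{a_2 r_1}{a_1(a_1-r_1+1)}<1$ is equivalent to $v_2<0$, so $R_2<1$ translates into $v_2<0$ and $k<1$. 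Combined with the existence condition $a_1>r_1$, all three eigenvalues are strictly negative and (1) follows from the Hartman--Grobman theorem.

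For part (2), the hypothesis forces exactly one of $v_2$ and $k-1$ to vanish, so $J_1$ has a simple zero eigenvalue together with two hyperbolic ones. After translating $E_1$ to the origin and diagonalising $J_1$, the Center Manifold Theorem yields a one-dimensional invariant curve through the origin. Parameterising it as $h(u)=\mathcal O(u^2)$ and substituting into the critical mode produces a reduced flow $\dot u=c u^2+\mathcal O(u^3)$; I would compute $c$ directly from the quadratic nonlinearities $(a_1+a_2)I_1I_2$ and $(a_1+k)I_1I_m$ of (\ref{SIS-True}) together with the explicit center and stable eigenvectors, and check $c\neq 0$ using that the coupling constants and the factor $(a_1-r_1)/a_1$ are all nonzero under the theorem's hypotheses. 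A nonvanishing quadratic coefficient on a one-dimensional center manifold then classifies $E_1$ as a saddle-node.

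For part (3), both $v_2=0$ and $k=1$, so $J_1$ has a simple negative eigenvalue $r_1-a_1$ together with a double zero eigenvalue whose geometric multiplicity is one (the subdiagonal entry $(a_1+a_2)(a_1-r_1)/a_1$ is nonzero). Restricted to the two-dimensional center subspace the linear part is thus a nontrivial Jordan block---exactly the Bogdanov--Takens configuration. The strategy is: (i) apply the Center Manifold Theorem to discard the strongly contracting direction and reduce to a planar system; (ii) perform a linear change of coordinates putting the nilpotent part into the canonical Jordan form $\bigl(\begin{smallmatrix}0 & 1 \\ 0 & 0\end{smallmatrix}\bigr)$; (iii) apply successive near-identity transformations to reach the Bogdanov--Takens normal form $\dot x=y$, $\dot y=\sum_{i\ge 2}a_i x^i+y\sum_{i\ge 1}b_i x^i$. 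A cusp of codimension three in this planar setting corresponds to $a_2\neq 0$ together with the vanishing of the lower-order $b_i$-invariants while the next invariant in the cusp hierarchy remains nonzero. The main obstacle is the coefficient bookkeeping: one must expand the center manifold to order four, track the cumulative effect of the normal-form transformations on the cubic and quartic coefficients, and verify that the algebraic relations $v_2=0$ and $k=1$ force precisely the degeneracies required by the codimension-three cusp classification (see e.g.\ \cite{guckenheimer2013nonlinear}), without inducing accidental cancellations that would raise the codimension.
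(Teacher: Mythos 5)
Your treatment of parts (1) and (2) matches the paper's argument. The paper likewise reads the eigenvalues $r_1-a_1$, $a_2-1-\frac{(a_1+a_2)(a_1-r_1)}{a_1}=v_2/a_1$ and $k-1$ off the triangular structure of $J_1$, and your translation of $R_2<1$ into $v_2<0$ and $k<1$ is correct (using $a_1>r_1$ so that $a_1-r_1+1>0$). For part (2) the paper performs exactly the one-dimensional center-manifold reduction you outline, obtaining $\dot w=-w^2+O(w^3)$; it only writes out the case $k=1$, so your intention to also treat the case $v_2=0$, $k\neq1$ is, if anything, more complete.

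Part (3) contains a genuine gap: the degeneracy pattern you propose to verify is the wrong one for this system. You describe the codimension-3 cusp via a nonzero coefficient of $x^2$ in the Bogdanov--Takens normal form together with the vanishing of lower-order $y$-coefficients. But the plane $\{I_2=0\}$ is invariant for (\ref{SIS-True}) (the $I_2$-equation has $I_2$ as a factor), and this invariance survives the center-manifold reduction; as a result, once the planar system is brought to the form $\dot x=y$, $\dot y=\cdots$, the right-hand side of the second equation contains \emph{no} pure $x^2$ term. This is exactly what the paper's computation produces: its final normal form is $\dot x_9=x_{10}$, $\dot x_{10}=f_{11}x_9x_{10}+f_{21}x_9^2x_{10}+f_{03}x_{10}^3+O(\|x\|^4)$, and the codimension-3 conclusion is drawn from the classification of this degenerate, non-cusp-type nilpotent singularity in \cite{Yann2008Bifurcation}, not from the classical cusp hierarchy in \cite{guckenheimer2013nonlinear}. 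If you carried out your plan you would compute the $x^2$ coefficient, find it identically zero, and your stated nondegeneracy test would fail; indeed the equilibrium admits a smooth invariant curve through the origin (the trace of $\{I_2=0\}$, on which the flow is $\dot z=-z^2+\cdots$), which a genuine cusp cannot have. To complete part (3) you must instead push the normal form through to the coefficients of $xy$, $x^2y$ and $y^3$ (and check whether a pure $x^3$ term survives from the $-I_m^2$ nonlinearity) and invoke the codimension-3 classification of nilpotent singularities with those nonvanishing invariants.
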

	\begin{proof}
		For the equilibrium point $E_1 = (\frac{a_1 - r_1}{a_1}, 0, 0)$, the Jacobian matrix of the system at this point is
		\begin{align*}
			J_1=\begin{pmatrix}
				r_1 - a_1&-\frac{(a_1 + a_2)(a_1 - r_1)}{a_1}&-\frac{(a_1 + k)(a_1 - r_1)}{a_1}\\
				0&a_2 - 1-\frac{(a_1 + a_2)(a_1 - r_1)}{a_1}&0\\
				0&\frac{(a_1 + a_2)(a_1 - r_1)}{a_1}&k - 1
			\end{pmatrix}
		\end{align*}
		Then the corresponding eigenvalues are $\lambda_{1}=r_1 - a_1<0$, $\lambda_{2}=a_2 - 1-\frac{(a_1 + a_2)(a_1 - r_1)}{a_1}$, $\lambda_{3}=k - 1$.
		(1) When the invasion reproduction number $R_2 < 1$ and $k < 1$, we have $a_2<\frac{a_1(a_1 - r_1 + 1)}{r_1}$. Then at this time, $\lambda_2=a_2 - 1-\frac{(a_1 + a_2)(a_1 - r_1)}{a_1}<0$ and $\lambda_{3}=k - 1<0$. Obviously, the equilibrium point $E_1$ is a locally asymptotically stable node at this time.
		(2) When $k = 1$, the matrix has only one zero eigenvalue at this time. Denote $J_1(k = 1)=A_1$. By the coordinate transformation $(x,y,z)=(I_1-\frac{a_1 - r_1}{a_1}, 0, 0)$, the equilibrium point $E_1$ at this time is translated to the origin, and we can get:
		\begin{align}
			\label{E1SN}
			\frac{\mathrm{d}X}{\mathrm{d}t}=A_1X + F(X)
		\end{align}
		where $X=(x,y,z)$, and $F(X)$ has the following form:
		\begin{align}
			F(X)=\begin{pmatrix}
				-a_1x^2-(a_1 + a_2)xy-(a_1 + 1)xz\\
				-a_2y^2-(a_1 + a_2)xy-(a_2 + 1)yz\\
				(a_1 + a_2)xy - z^2
			\end{pmatrix}
		\end{align}
		
		Find the eigenvectors $Aq = 0$ and $A^Tp = 0$, and we can get:
		\begin{align*}
			q=\begin{pmatrix}
				\frac{a_1 + 1}{a_1}\\
				0\\
				1
			\end{pmatrix},\quad
			p=\begin{pmatrix}
				0\\
				\frac{\gamma}{a_2 - 1-\gamma}\\
				1
			\end{pmatrix}
		\end{align*}
		where $\gamma=\frac{(a_1 + a_2)(a_1 - r_1)}{a_1}$. According to the reference {\cite{kuznetsov1998elements}}, the center manifold corresponding to the saddle-node bifurcation has the following form:
		\begin{align*}
			\frac{\mathrm{d}w}{\mathrm{d}t}=\sigma w^2+\theta w^3+O(w^4).
		\end{align*}
		where $w = \langle p,X\rangle=\frac{\gamma}{a_2 - 1-\gamma}y + z$. Substitute it into the formula:
		\begin{align*}
			\sigma=-1<0,\quad
			\theta=\frac{a_1 + 2}{2a_1(r_1 - a_1)}<0.
		\end{align*}
		Therefore, the system (\ref{E1SN}) is equivalent to the system near the origin:
		\begin{align*}
			\frac{\mathrm{d}w}{\mathrm{d}t}=-w^2+O(w^3).
		\end{align*}
		Therefore, it can be known that the equilibrium point $E_1$ is a saddle-node of codimension 1 at this time.
		
		(3) When $a_2 = \frac{a_1(a_1 - r_1 + 1)}{r_1}$ and $k = 1$, the eigenvalues $\lambda_1 = r_1 - a_1<0$, $\lambda_2=\lambda_3 = 0$. By performing a coordinate transformation on the system, we translate the equilibrium point $E_1$ to the origin, and obtain:
		\begin{align*}
			\begin{cases}
				\frac{\mathrm{d}x}{\mathrm{d}t}&=(r_1 - a_1)x-\frac{(a_1 + 1)(a_1 - r_1)}{r_1}y-\frac{(a_1 + 1)(a_1 - r_1)}{a_1}z - a_1x^2-\frac{a_1(a_1 + 1)}{r_1}xy\\
				\frac{\mathrm{d}y}{\mathrm{d}t}&=-\frac{a_1(a_1 + 1 - r_1)}{r_1}y^2-\frac{a_1(a_1 + 1)}{r_1}xy-\frac{a_1(a_1 + 1 - r_1)+r_1}{r_1}yz\\
				\frac{\mathrm{d}z}{\mathrm{d}t}&=\frac{(a_1 + 1)(a_1 - r_1)}{r_1}y - z^2
			\end{cases}
		\end{align*}
		According to the center manifold theorem, there exists a center manifold $x = u_{10}y+u_{01}z + u_{20}y^2+u_{11}yz + u_{02}z^2+O(\|(y,z)\|^3)$. Substituting it into the translated system, we can get:
		\begin{align*}
			\frac{\mathrm{d}x}{\mathrm{d}t}&=(r_1 - a_1)(u_{10}y+u_{01}z + u_{20}y^2+u_{11}yz + u_{02}z^2)-\frac{(a_1 + 1)(a_1 - r_1)}{r_1}y\\
			&-\frac{(a_1 + 1)(a_1 - r_1)}{a_1}z - a_1(u_{10}y+u_{01}z + u_{20}y^2+u_{11}yz + u_{02}z^2)^2\\
			&-\frac{a_1(a_1 + 1)}{r_1}(u_{10}y+u_{01}z + u_{20}y^2+u_{11}yz + u_{02}z^2)y\\
			&-(a_1 + 1)(u_{10}y+u_{01}z + u_{20}y^2+u_{11}yz + u_{02}z^2)z\\
			&=(u_{10}(r_1 - a_1)-\frac{(a_1 + 1)(a_1 - r_1)}{r_1})y+(u_{01}(r_1 - a_1)-\frac{(a_1 + 1)(a_1 - r_1)}{a_1})z\\
			&+(u_{20}(r_1 - a_1)-u_{10}^2a_1 - u_{10}\frac{a_1(a_1 + 1)}{r_1})y^2\\
			&+(u_{11}(r_1 - a_1)-2u_{10}u_{01}a_1 - u_{01}\frac{a_1(a_1 + 1)}{r_1}-u_{10}(a_1 + 1))\\
			&+z^2(u_{02}(r_1 - a_1)-u_{01}^2a_1 - u_{01}(a_1 + 1))+\cdots
		\end{align*}
		At the same time, differentiating both sides of the center manifold $x = u_{10}y+u_{01}z + u_{20}y^2+u_{11}yz + u_{02}z^2+O(\|(y,z)\|^3)$ with respect to time, we can obtain:
		\begin{align*}
			\frac{\mathrm{d}x}{\mathrm{d}t}&=u_{10}\frac{\mathrm{d}y}{\mathrm{d}t}+u_{01}\frac{\mathrm{d}z}{\mathrm{d}t}+2u_{20}y\frac{\mathrm{d}y}{\mathrm{d}t}+u_{11}z\frac{\mathrm{d}y}{\mathrm{d}t}+u_{11}y\frac{\mathrm{d}z}{\mathrm{d}t}+2u_{02}z\frac{\mathrm{d}z}{\mathrm{d}t}\\
			&=\frac{u_{01}(a_1 + 1)(a_1 - r_1)}{r_1}y\\
			&+(-\frac{u_{10}a_1(a_1 + 1 - r_1)}{r_1}+\frac{u_{11}(a_1 + 1)(a_1 - r_1)}{r_1}-\frac{u_{10}^2a_1(a_1 + 1)}{r_1}\\
			&+\frac{u_{10}u_{01}a_1(a_1 + 1)}{r_1})y^2+(-\frac{u_{10}a_1(a_1 + 1 - r_1)+r_1}{r_1}u_{10}-\frac{u_{10}u_{01}a_1(a_1 + 1)}{r_1}\\
			&+\frac{2u_{02}(a_1 + 1)(a_1 - r_1)}{r_1}+\frac{u_{01}^2a_1(a_1 + 1)}{r_1})yz - u_{01}z^2+\cdots
		\end{align*}
		By comparing the coefficients, we can get:
		\begin{align*}
			u_{01}&=-\frac{a_1 + 1}{a_1}, \quad
			u_{10}=\frac{a_1 + 1}{a_1r_1}, \\
			u_{02}&=-\frac{a_1 + 1}{a_1(a_1 - r_1)}, \quad
			u_{11}=-\frac{(a_1^3 - a_1 - 2)r_1+(a_1 + 1)^2}{(a_1 - r_1)a_1r_1^2}, \\
			u_{20}&=\frac{3a_1^2r_1 + 4a_1r_1+2r_1+a_1^2r_1^2+a_1r_1^2-a_1^4r_1-2a_1^3-6a_1^2-6a_1 - 2}{a_1r_1^3(a_1 - r_1)}.
		\end{align*}
		Substituting the obtained center manifold into the translated system, we can get:
		\begin{align*}
			\begin{cases}
				\frac{\mathrm{d}y}{\mathrm{d}t}=a_{20}y^2+a_{11}yz+a_{30}y^3+a_{21}y^2z+a_{12}yz^2+O(\|(y,z)\|^4) \\
				\frac{\mathrm{d}z}{\mathrm{d}t}=b_{10}y+b_{20}y^2+b_{11}yz+b_{30}y^3+b_{21}y^2z+b_{12}yz^2+O(\|(y,z)\|^4)
			\end{cases}
		\end{align*}
		where:
		\begin{align*}
			a_{20}&=-\frac{a_1(a_1 - r_1 + 1)}{r_1}+\frac{a_1(a_1 + 1)u_{10}}{r_1}\\
			a_{11}&=-\frac{a_1(a_1 - r_1 + 1)}{r_1}+\frac{a_1(a_1 + 1)u_{01}-r_1}{r_1}\\
			b_{10}&=\frac{a_1(a_1 - r_1 + 1)-r_1}{r_1}
		\end{align*}
		Denote $l_{00}=-\frac{a_1(a_1 + 1)}{r_1}$, then the other coefficients have the following form:
		\begin{align*}
			a_{ij}=l_{00}u_{i - 1j},\quad b_{ij}=l_{00}u_{i - 1j},i = 1,2,4,j=0,1,2.
		\end{align*}
		
		Then perform a coordinate transformation, let:
		\begin{align*}
			\begin{cases}
				x_1=z \\
				x_2=b_{10}y
			\end{cases}
		\end{align*}
		so we can get
		\begin{align*}
			\begin{cases}
				\frac{\mathrm{d}x_1}{\mathrm{d}t}=x_2+\frac{b_{11}}{b_{10}}x_1x_2+\frac{b_{20}}{b_{10}^2}x_2^2+\frac{b_{12}}{b_{10}}x_1^2x_2+\frac{b_{21}}{b_{10}^2}x_1x_2^2+\frac{b_{30}}{b_{10}^3}x_1^3+O(\|x\|^4)\\
				\frac{\mathrm{d}x_2}{\mathrm{d}t}=a_{11}x_1x_2+\frac{a_{20}}{b_{10}}x_2^2+\frac{a_{21}}{b_{10}}x_1x_2^2+a_{12}x_1^2x_2+\frac{a_{30}}{b_{10}^2}x_2^3+O(\|x\|^4)
			\end{cases}
		\end{align*}
		
		\begin{align*}
			\begin{cases}
				x_3=x_1 \\
				x_4=x_2+\frac{b_{11}}{b_{10}}x_1x_2+\frac{b_{20}}{b_{10}^2}x_2^2+\frac{b_{12}}{b_{10}}x_1^2x_2+\frac{b_{21}}{b_{10}^2}x_1x_2^2+\frac{b_{30}}{b_{10}^3}x_1^3+O(\|x\|^4)
			\end{cases}
		\end{align*}
		
		\begin{align*}
			\begin{cases}
				x_1=x_3 \\
				x_2=x_4+v_{11}x_3x_4+v_{02}x_4^2+v_{21}x_3^2x_4+v_{12}x_3x_4^2+v_{03}x_4^3+O(\|x\|^4)
			\end{cases}
		\end{align*}
		wherein:
		\begin{align*}
			v_{11}&=-\frac{b_{11}}{b_{10}},v_{02}=-\frac{b_{20}}{b_{10}^2}, \\
			v_{21}&=\frac{b_{11}^2-b_{12}b_{10}}{b_{10}^2}.\\
			v_{12}&=\frac{3b_{11}b_{20}-b_{21}b_{10}}{b_{10}^3},v_{03}=\frac{2b_{20}^2}{b_{10}^4}.
		\end{align*}
		Substitute the new variables \((x_3, x_4)\) into the original system, and we can get:
		\begin{align}
			\label{BT-1}
			\begin{cases}
				\frac{\mathrm{d}x_3}{\mathrm{d}t} = x_4\\
				\frac{\mathrm{d}x_4}{\mathrm{d}t} = e_{11}x_3x_4+e_{02}x_4^2  + e_{21}x_3^2x_4 + e_{12}x_3x_4^2+e_{03}x_4^3+O(\|x\|^4)\\
			\end{cases}
		\end{align}
		wherein:
		\begin{align*}
			e_{11}&=a_{11}, \quad
			e_{02}=\frac{a_{20}+b_{21}}{b_{10}},\\
			e_{21}&=a_{12}+\frac{a_{11}b_{11}}{b_{10}}+a_{11}v_{11},\\
			e_{12}&=a_{12}v_{02}+\frac{(a_{20}+b_{11})b_{10}+a_{21}b_{10}+2a_{11}b_{20}+a_{20}b_{11}+b_{11}^2 + 2b_{10}b_{12}}{b_{10}^2},\\
			e_{03}&=\frac{2(a_{20}+b_{11})b_{10}^2v_{02}+a_{30}+b_{21}+2a_{20}b_{20}+b_{11}b_{20}}{b_{10}^3}.
		\end{align*}
		Then make the following transformation for the system (\ref{BT-1}):
		\begin{align*}
			\begin{cases}
				x_5=x_3 \\
				x_6=x_4-e_{02}x_3x_4
			\end{cases}
		\end{align*}
		We obtain the system:
		\begin{align}
			\label{BT-2}
			\begin{cases}
				\frac{\mathrm{d}x_5}{\mathrm{d}t}=x_6 +e_{02}x_5x_6+ e_{02}^2x_5^2x_6 +O(\|x\|^4)\\
				\frac{\mathrm{d}x_6}{\mathrm{d}t}=e_{11}x_5x_6  +e_{21}x_5^2x_6 + (e_{12}-e_{02}^2)x_5x_6^2 -e_{02} e_{03}x_6^3+O(\|x\|^4)
			\end{cases}
		\end{align}
		
		Perform a time parameter transformation $t=(1 - \frac{e_{12}-e_{02}^2}{2}x_7^2x_8)\tau$ on the system (\ref{BT-2}) and let:
		\begin{align*}
			\begin{cases}
				x_7=x_5\\
				x_8=x_6+e_{02}x_5x_6+ e_{02}^2x_5^2x_6 +O(\|x\|^4)
			\end{cases}
		\end{align*}
		obtain the system:
		\begin{align}
			\begin{cases}
				\frac{\mathrm{d}x_7}{\mathrm{d}\tau}=x_8-\frac{e_{12}}{2}x_7^2x_8 \\
				\frac{\mathrm{d}x_8}{\mathrm{d}\tau}=e_{11}x_7x_8+e_{21}x_7^2x_8+(e_{12}-e_{02}^2)x_7x_8^2-e_{02} e_{03}x_8^3+O(\|x\|^4)
			\end{cases}
		\end{align}
		let
		\begin{align*}
			\begin{cases}
				x_9=x_7\\
				x_{10}=x_8-\frac{e_{12}}{2}x_7^2x_8
			\end{cases}
		\end{align*}
		so
		\begin{align}
			\begin{cases}
				\frac{\mathrm{d}x_9}{\mathrm{d}\tau}=x_{10} \\ 
				\frac{\mathrm{d}x_{10}}{\mathrm{d}\tau}=f_{11}x_9x_{10}+f_{21}x_9^2x_{10}+f_{03}x_{10}^3+O(\|x\|^4)
			\end{cases}
		\end{align}
		wherein$f_{11}=e_{11},f_{21}=e_{21},f_{03}=-e_{02}e_{03}$.
		
	According to the reference \cite{Yann2008Bifurcation}, the equilibrium point $E_1$ at this time is an equilibrium point with a codimension of 3.
	\end{proof}
	
	According to the theorem, compared with the disease-free equilibrium point $E_0$, for the boundary equilibrium point $E_1$, the threshold value of the infection rate $a_2$ of the second type of infectious disease, which changes the stability of the equilibrium point, becomes larger. This indicates that there is still a certain competitive relationship between the two types of infectious diseases. Therefore, when only one infectious disease is prevalent, it is necessary to control the infection ability of the infectious disease that may have a mixed infection with it, so as to avoid the situation getting out of control. For the population with mixed infections, the threshold value corresponding to its infection rate is consistent with the threshold value discussed in the stability analysis of the disease-free equilibrium point. Therefore, when the situation of mixed infection occurs, the infection ability of the population with mixed infections to other populations should be controlled first. Once the infection ability of the population with mixed infections exceeds the threshold value, both the situation without infectious diseases and the situation with only one infectious disease will no longer be stable, making the spread of infectious diseases a more intractable situation.
	
	Analyze the stability of the interior equilibrium point $E_4$ under special circumstances.
	\begin{theorem}\label{stablethmE4}
		Let $a_1 = \frac{a_2}{a_2 - 1}$, then when $\frac{a_2(2 - a_2)}{a_2 - 1}<r_1<\frac{1}{a_2}+\frac{1}{a_2 - 1}$, the endemic equilibrium point $E_4$ is unstable.
	\end{theorem}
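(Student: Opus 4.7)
The plan is to prove instability of $E_4$ by showing that the Jacobian determinant satisfies $\det(J_{E_4})>0$; since $\det(J_{E_4})$ equals the product of the three eigenvalues of $J_{E_4}$, positivity of this product forces at least one eigenvalue with positive real part, yielding instability. The substantive work is therefore an algebraic evaluation of $\det(J_{E_4})$.

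First, I would exploit the hypothesis $a_1=a_2/(a_2-1)$, which produces the key identity $a_1+a_2=a_1a_2$, to rewrite $v_1=a_2(a_1-a_2-r_1)$ and $v_2=a_1(a_2r_1-1-a_1)$ in factored form. A direct check shows that $v_1<0$ is equivalent to $r_1>a_2(2-a_2)/(a_2-1)$ and $v_2<0$ to $r_1<1/a_2+1/(a_2-1)$, so the hypothesized interval on $r_1$ is precisely the locus where $v_1,v_2<0$. Consequently $c=v_1v_2(a_1+a_2)/u^2>0$, and since $E_4$ corresponds to the degenerate case $a=0$ in \eqref{eq4Im}, positivity of the root $I_m^{\ast}=-c/b$ forces $b<0$.

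Second, at any interior equilibrium the equations \eqref{rightEquation4SIS} imply $h_1=-a_1I_1$ and $h_2=-a_2I_2$, simplifying the diagonal of $J$. Factoring $I_1$ from the first row and $I_2$ from the second, and expanding along the third row, I would compute the three $2\times 2$ cofactors in closed form; they evaluate precisely to $u_1$, $u_2$ and $-u$, giving
\begin{align*}
\det(J)=I_1I_2\bigl[(a_1+a_2)(u_1I_2+u_2I_1)+u(1-k+2kI_m)\bigr].
\end{align*}
Substituting the Cramer formulas \eqref{I1}--\eqref{I2} and using the defining identities $(a_1+a_2)u_1u_2=(a+k)u^2$ and $(a_1+a_2)(u_1v_2+u_2v_1)=(b-k+1)u^2$ read off from the definitions of $a$ and $b$, the bracket collapses and yields the clean identity
\begin{align*}
\det(J)=-u\,I_1I_2\,(2aI_m+b),
\end{align*}
valid at every interior equilibrium of the system.

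Specializing to $E_4$, where $a=0$, leaves $\det(J_{E_4})=-uI_1^{\ast}I_2^{\ast}b$, which is strictly positive since $u>0$, $I_1^{\ast},I_2^{\ast}>0$, and $b<0$ by Step one; this completes the argument. The main obstacle is the algebraic condensation in Step two producing $\det(J)=-uI_1I_2(2aI_m+b)$: it requires recognizing the three $2\times 2$ cofactors as $u_1$, $u_2$, $-u$ and then invoking the defining relations of $a$ and $b$ in the correct order. Once this identity is in hand, the instability conclusion is immediate, and it incidentally shows that $E_4$ is always unstable whenever it exists, not merely under the special slice $a_1=a_2/(a_2-1)$.
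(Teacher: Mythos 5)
Your proposal is correct, and I verified the key identity: factoring $I_1$, $I_2$ out of the first two rows of $J$ and expanding along the third row does produce the cofactors $u_1$, $u_2$, $-u$, and substituting \eqref{I1}--\eqref{I2} together with the definitions of the coefficients of \eqref{eq4Im} collapses the bracket to $-u(2aI_m+b)$; with $a=0$ and $b<0$ at $E_4$ this gives $\det J>0$, and for a real $3\times 3$ matrix the characteristic polynomial then has negative constant term, hence a positive real root. The paper arrives at the same decisive quantity — the constant term $p_0=-\det J$ of the characteristic polynomial — but by a heavier route: it writes out the full cubic $\lambda^3+p_2\lambda^2+p_1\lambda+p_0$ at $E_4$, builds the Routh table, and checks the sign pattern $1>0$, $p_2>0$, $\mu_2>0$, $\nu_2<0$ to conclude there is exactly one eigenvalue in the right half-plane. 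Your version is leaner (only the determinant is needed, so the verifications of $p_2>0$ and $\mu_2>0$ are unnecessary for instability) and strictly more informative in a different direction: the identity $\det J=-uI_1I_2(2aI_m+b)$ holds at every interior equilibrium, so it shows $E_4$ is unstable whenever it exists, not only on the slice $a_1=a_2/(a_2-1)$ where $k_0=1$ (and it would give the same conclusion for $E_5$, where $2aI_m+b=-\sqrt{b^2-4ac}<0$). What the paper's Routh computation buys in exchange is the precise count of unstable directions (exactly one), which your determinant argument alone does not provide.
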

	\begin{proof}
		When $a_1=\frac{a_2}{a_2 - 1}$, we have $k_0 = 1$. Then for the interior equilibrium point $E_4$, we have
		\begin{align*}
			I_m = -\frac{v_1v_2}{u_1v_2+u_2v_1}
		\end{align*}
		Substitute it into the expressions of $I_1$ and $I_2$ in terms of $I_m$, we can get
		\begin{align*}
			I_1=-\frac{u_2v_1^2}{u(u_1v_2+u_2v_1)}\quad I_2=-\frac{u_1v_2 ^2}{u(u_1v_2+u_2v_1)}
		\end{align*}
		Then the Jacobian matrix of the system at this point is:
		\begin{align*}
			J = 
			\begin{pmatrix}
				\frac{a_1u_2v_1^2}{u(u_1v_2 + u_2v_1)} & \frac{(a_1 + a_2)u_2v_1^2}{u(u_1v_2 + u_2v_1)} & \frac{(a_1 + k_0)u_2v_1^2}{u(u_1v_2 + u_2v_1)} \\
				\frac{(a_1 + a_2)u_1v_2^2}{u(u_1v_2 + u_2v_1)} & \frac{a_2u_1v_2^2}{u(u_1v_2 + u_2v_1)} & \frac{(a_2 + k_0)u_1v_2^2}{u(u_1v_2 + u_2v_1)} \\
				-\frac{(a_1 + a_2)u_1v_2^2}{u(u_1v_2 + u_2v_1)} & -\frac{(a_1 + a_2)u_2v_1^2}{u(u_1v_2 + u_2v_1)} & \frac{2v_1v_2u}{(u_1v_2 + u_2v_1)u}
			\end{pmatrix}
		\end{align*}
		Then calculate the corresponding characteristic polynomial:
		\begin{align*}
			&|\lambda E - J| \\
			=&
			\begin{vmatrix}
				\lambda - \frac{a_1u_2v_1^2}{u(u_1v_2 + u_2v_1)} & -\frac{(a_1 + a_2)u_2v_1^2}{u(u_1v_2 + u_2v_1)} & -\frac{(a_1 + k_0)u_2v_1^2}{u(u_1v_2 + u_2v_1)} \\
				-\frac{(a_1 + a_2)u_1v_2^2}{u(u_1v_2 + u_2v_1)} & \lambda - \frac{a_2u_1v_2^2}{u(u_1v_2 + u_2v_1)} & -\frac{(a_2 + k_0)u_1v_2^2}{u(u_1v_2 + u_2v_1)} \\
				\frac{(a_1 + a_2)u_1v_2^2}{u(u_1v_2 + u_2v_1)} & \frac{(a_1 + a_2)u_2v_1^2}{u(u_1v_2 + u_2v_1)} & \lambda - \frac{2v_1v_2u}{u(u_1v_2 + u_2v_1)}
			\end{vmatrix}\\
			=&p_3\lambda^3+p_2\lambda^2+p_1\lambda+p_0 = 0.
		\end{align*}
		where:
		\begin{align*}
			p_3&= 1 ,\quad \quad
			p_2=-\frac{1}{u(u_1v_2 + u_2v_1)}(a_1u_2v_1^2 + a_2u_1v_2^2+2v_1v_2u),\\
			p_1&=\frac{1}{u^2(u_1v_2 + u_2v_1)}\left(2k_0(a_1 + a_2)u_1u_2v_1^2v_2^2+2u_1v_2(a_1u_2v_1^2 + a_2u_1v_2^2)\right),\\
			p_0&=\frac{1}{u^3(u_1v_2 + u_2v_1)^3}\left(u_1^2v_2^2v_1^4(a_1 + a_2)u_1+u_1^2v_2^4v_1^2(a_1 + a_2)u_2+2u_1^2u_2v_1^3v_2^3\right).
		\end{align*}
		Then construct the corresponding Routh table as follows:
		\begin{align*}
			\begin{matrix}
				1&  p_1 &0   \\
				p_2 & p_0 &  0 \\
				\mu_2 & 0 &   \\
				\nu_2 & 0 &  
			\end{matrix}
		\end{align*}
		Correspondingly:
		\begin{align*}
			\mu_2=&\frac{u_1^2u_2v_1^2v_2^4(a_1 + a_2)(u_1 + 2a_2k_0)+u_1u_2^2v_1^4v_2^2(a_1 + a_2)(u_2 + 2a_1k_0)}{u^2(u_1v_2 + u_2v_1)^2(a_1u_2v_1^2+a_2u_1v_2^2 + 2uv_1v_2)}
			\\&+\frac{2u_1u_2u v_1^3v_2^3(u + 2k_0(a_1 + a_2)+2a_1a_2)+2a_1^2uu_2^2v_1^5v_2+2a_2^2uu_1^2v_1v_2^5}{u^2(u_1v_2 + u_2v_1)^2(a_1u_2v_1^2+a_2u_1v_2^2 + 2uv_1v_2)}
			\\&+\frac{4a_1u^2u_2v_1^4v_2^2+4a_2u^2u_1v_1^2v_2^4}{u^2(u_1v_2 + u_2v_1)^2(a_1u_2v_1^2+a_2u_1v_2^2 + 2uv_1v_2)}; \\
			\nu_2=&\frac{1}{u^3(u_1v_2 + u_2v_1)^3}\left(u_1^2v_2^2v_1^4(a_1 + a_2)u_1+u_1^2v_2^4v_1^2(a_1 + a_2)u_2+2u_1^2u_2v_1^3v_2^3\right).
		\end{align*}
		Then by observing the Routh table, we have $1>0, p_2>0, \mu_2>0, \nu_2<0$; Therefore, at this time, the system has one eigenvalue in the right half-plane, that is, this equilibrium point is unstable.
	\end{proof}

	\section{Bifurcation Analysis}
	From the discussion on the stability of the equilibrium points in the previous subsection, it can be concluded that the system may undergo a saddle-node bifurcation of codimension 1 or a Bogdanov-Takens bifurcation of codimension 3. In this subsection, these bifurcation cases will be analyzed.
	\begin{theorem}
		When $v_1 = 0$, the system undergoes a saddle-node bifurcation at the endemic equilibrium point $E_4$.
	\end{theorem}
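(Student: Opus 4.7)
The plan is to apply Sotomayor's theorem (equivalently, a center-manifold reduction), treating $v_1$ as the bifurcation parameter. Since $\partial v_1/\partial r_1=-a_2\neq 0$, varying $v_1$ amounts to varying $r_1$ while $a_1,a_2$ and $k=k_0$ are held fixed.

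First, I would pin down the geometric picture of the bifurcation. Under the hypotheses of Theorem~\ref{exist4E4E5} with $k=k_0$, the quadratic~\eqref{eq4Im} has vanishing leading coefficient $a=0$, so it reduces to $bI_m+c=0$ with $c=v_1v_2(a_1+a_2)/u^2$. At $v_1=0$, $c=0$, hence $I_m=0$; combining with~\eqref{I1}--\eqref{I2} also gives $I_1=0$ and $I_2=-v_2/u$. Using the identity $v_1=0\iff a_1=a_2(r_1+a_2-1)$, a short algebraic simplification yields $-v_2/u=(a_2-1)/a_2$. Hence at $v_1=0$ the equilibrium $E_4$ coalesces with the boundary equilibrium $E_2=\bigl(0,(a_2-1)/a_2,0\bigr)$, which is the geometric signature of the bifurcation.

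Next, I would compute the Jacobian at the coalesced point. Because $I_1=I_m=0$, the Jacobian is block lower-triangular, and using $a_1=a_2(r_1+a_2-1)$ the $(1,1)$ entry $h_1=a_1-r_1-(a_1+a_2)(a_2-1)/a_2$ collapses identically to zero. The remaining diagonal entries $-(a_2-1)$ and $k_0-1$ are both negative under the hypotheses $a_2>1$ and (strict) $k_0<1$. Thus the Jacobian has a simple zero eigenvalue, with right and left null vectors $q=(1,q_2,q_3)^{T}$ and $p=(1,0,0)^{T}$ obtainable in closed form from the remaining equations $Jq=0$ and $J^{T}p=0$.

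Finally, I would verify the Sotomayor non-degeneracy conditions or, equivalently, compute the one-dimensional normal form on the center manifold. The reduced equation has the form
\begin{align*}
\dot w \,=\, \tfrac12\langle p,D^{2}F(q,q)\rangle\,w^{2} \,+\, \langle p,F_{v_1}\rangle\,v_1 \,+\, \langle p,D^{2}F_{v_1}(q)\rangle\,w\,v_1 \,+\, O(\|(w,v_1)\|^{3}),
\end{align*}
and because $p=e_{1}$ the quadratic coefficient reduces to the first component of $D^{2}F(q,q)$, which from~\eqref{SIS-True} equals $-2a_{1}-2(a_{1}+a_{2})q_{2}-2(a_{1}+k_{0})q_{3}$. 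The main obstacle is showing this coefficient is nonzero throughout the admissible range of $a_{1}$ in Theorem~\ref{exist4E4E5}; this requires substituting the explicit formulas for $q_{2},q_{3}$, using $a_{1}=a_{2}(r_{1}+a_{2}-1)$, and certifying a definite sign. Once this nondegeneracy is in hand together with the corresponding transversality check in $v_1$, a fold-type normal form results and the claimed saddle-node bifurcation at $E_{4}$ follows.
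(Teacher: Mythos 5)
Your proposal follows essentially the same route as the paper: a center-manifold/projection reduction at the simple zero eigenvalue with left null vector $p=(1,0,0)^{T}$, yielding a fold normal form whose quadratic coefficient $-a_1-(a_1+a_2)q_2-(a_1+k_0)q_3$ is exactly the paper's $f_2$. The nondegeneracy you flag as the main remaining obstacle is not verified in the paper either (it simply asserts $f_2\neq 0$), so your outline matches the paper's argument in both structure and level of completeness, and your added observation that at $v_1=0$ the equilibrium $E_4$ coalesces with $E_2=\bigl(0,(a_2-1)/a_2,0\bigr)$ is correct and clarifies the geometry.
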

	\begin{proof}
		For the equilibrium point $E_4$, when $v_1 = 0$, the Jacobian matrix of the system at this point is as follows:
		\begin{align*}
			J_4 = 
			\begin{pmatrix}
				0 & 0 & 0 \\
				\frac{(a_1 + a_2)v_2}{u} & \frac{a_2v_2}{u} & \frac{(a_2 + k)v_2}{u} \\
				-\frac{(a_1 + a_2)v_2}{u} & 0 & k-1
			\end{pmatrix}
		\end{align*}
		Then obviously, the matrix has eigenvalues $\lambda_{1}=0$, $\lambda_{2}=\frac{a_2v_2}{u}$, $\lambda_{3}=k - 1$ at this time. Next, we use the projection method to find the normal form of the bifurcation at this time.
		
		After translating the equilibrium point to the origin through the coordinate transformation $x_1 = I_1, x_2 = I_2+\frac{v_2}{u}, x_3 = I_m$, the system can be expressed in the following form:
		\begin{align}
			\label{foldSys}
			\dot{X}=J_4X+F(X)
		\end{align}
		$X=(x_1,x_2,x_3)^T$,and:
		\begin{align*}
			F(X)&=\begin{pmatrix}
				-a_1x_1^2-(a_1+a_2)x_1x_2-(a_1+k)x_1x_3\\
				- a_2x_2^2-(a_1 + a_2)x_1x_2-(a_2 + k)x_2x_3\\
				(a_1 + a_2)x_1x_2 - kx_3^2
			\end{pmatrix} \\
			&=\frac{1}{2}B(X,X)+\frac{1}{6}C(X,X,X)+O(\left \| X \right \| ^4),
		\end{align*}
		By solving the eigenvector equations $J_4q = 0$ and $J_4^Tp = 0$, we can obtain:
		\begin{align*}
			q = 
			\begin{pmatrix}
				1\\
				-\frac{(a_1 + a_2)[(k - 1)u+(a_2 + k)v_2]}{a_2u(k-1)}\\
				\frac{(a_1 + a_2)v_2}{u(k-1)}
			\end{pmatrix}
			,\quad
			p = \begin{pmatrix}
				1\\
				0\\
				0
			\end{pmatrix}
		\end{align*}
		Then, by the Center Manifold Theorem \cite{carr2012applications}, for $X\in\mathbb{R}^3$, it can be decomposed as $x = wq + y$, where $wq\in T^{c}$ and $y\in T^{sc}$. According to the vectors $p$ and $q$ obtained from the above solution, combined with the formulas for solving $w$ and $y$:
		\begin{align*}
			\begin{cases}
				w =\left \langle p,x \right \rangle \\
				y = x-\left \langle p,x \right \rangle
			\end{cases}
		\end{align*}
		$w = x_1$,$ y =\begin{pmatrix}
			0\\
			x_2+\frac{(a_1+a_2)[(k-1)u+(a_2+k)v_2]}{a_2u}x_1 \\
			x_3-\frac{(a_1+a_2)v_2}{u} x_1
		\end{pmatrix}:=\begin{pmatrix}
			0 \\
			y_1 \\
			y_2
		\end{pmatrix}$
		
		Then, according to the center manifold theorem, the saddle-node bifurcation has the following form:
		\begin{align*}
			\dot{w} = aw^2+bw^3+O(w^4),
		\end{align*}
		wherein
		\begin{align*}
			a&=\frac{1}{2}\left \langle p,B(q,q) \right \rangle, \quad
			b=\frac{1}{6}\left \langle p,C(q,q,q)-3B(q,A^{INV}a) \right \rangle 
		\end{align*}
		Therefore, it can be obtained that the system is equivalent to the following system near the origin:
		\begin{align*}
			\frac{\mathrm{d}w }{\mathrm{d}t} = f_2w^2+f_3w^3+O(\left | w \right | ^4),
		\end{align*}
		$f_2=\frac{1}{2}\left(\frac{(a_1+a_2)^2[(k-1)u+(a_2+k)v_2]}{a_2u(k-1)}
		-\frac{(a_1+k)(a_1+a_2)v_2}{u(k-1)}
		-a_1\right) \ne 0 $;Then let $t = f_2\tau$, and we obtain the following system: 
		\begin{align}
			\label{fold}
			\frac{\mathrm{d}w }{\mathrm{d}\tau} = w^2+\frac{f_3}{f_2}w^3+O(\left | w \right | ^4),
		\end{align}
		Then, according to the literature \cite{guckenheimer2013nonlinear}, \eqref{fold} has the following universal unfolding:
		\begin{align*}
			\frac{\mathrm{d}w}{\mathrm{d}t}=\varepsilon+w^2+O(|w|^3).
		\end{align*}
		So we can get that system(\ref{foldSys}) undergoes a codim-1 saddle-node bifurcation at equilibrium $E_4$.
	\end{proof}
	
	\begin{theorem}\label{BT3E4}
		For the endemic equilibrium point $E_4$ of the endemic disease, when the parameters $k = 1$ and $v_2=0$, the system undergoes a Bogdanov - Takens bifurcation of codimension 3.
	\end{theorem}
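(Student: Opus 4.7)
The plan is to leverage the cusp-of-codimension-3 analysis already performed for $E_1$ in Theorem \ref{stablethmE1}(3), since the conditions $v_2 = 0$ and $k = 1$ actually force $E_4$ to collide with $E_1$. Step 1 is to verify this collision: substituting $v_2 = 0$ into $I_m = -v_1v_2/(u_1v_2 + u_2v_1)$ gives $I_m = 0$ and hence $I_2 = 0$, $I_1 = -v_1/u$, and using $v_2 = 0 \Leftrightarrow r_1(a_1+a_2) = a_1(a_1+1)$ a short calculation yields $-v_1/u = (a_2-1)/(a_1+a_2) = (a_1-r_1)/a_1$, the first coordinate of $E_1$. Thus the degenerate $E_4$ coincides with $E_1$, and the algebraic condition $a_2 = a_1(a_1-r_1+1)/r_1$ of Theorem \ref{stablethmE1}(3) is precisely $v_2 = 0$.

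Step 2 is to reuse the full chain of center manifold reduction and successive coordinate/time transformations from the proof of Theorem \ref{stablethmE1}(3). That chain already produced the truncated planar normal form
\begin{align*}
    \dot{x}_9 = x_{10}, \quad \dot{x}_{10} = f_{11} x_9 x_{10} + f_{21} x_9^2 x_{10} + f_{03} x_{10}^3 + O(\|x\|^4),
\end{align*}
which is the germ of the degenerate Bogdanov--Takens singularity of codimension 3 in the classification of \cite{Yann2008Bifurcation}, provided the non-degeneracy conditions $f_{11}, f_{21}, f_{03} \neq 0$ hold; these would be verified by direct substitution of the explicit expressions already derived.

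Step 3 is to introduce three unfolding parameters $\mu = (\mu_1, \mu_2, \mu_3) := (a_1 - a_1^*, a_2 - a_2^*, k - 1)$, namely the two single-strain infection rates and the co-infection transmission rate announced in the abstract. Repeating the translation, center manifold expansion and each coordinate/time change while retaining linear-in-$\mu$ contributions injects constant, linear-$\eta_1$ and linear-$\eta_2$ terms into the second equation and produces the parameter-dependent form
\begin{align*}
    \dot{\eta}_1 = \eta_2, \quad \dot{\eta}_2 = \beta_1(\mu) + \beta_2(\mu)\eta_1 + \beta_3(\mu)\eta_2 + f_{11}\eta_1\eta_2 + f_{21}\eta_1^2\eta_2 + f_{03}\eta_2^3 + \cdots,
\end{align*}
which matches the versal unfolding template for a BT3 singularity. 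Step 4 is the transversality check
\begin{align*}
    \det \frac{\partial(\beta_1, \beta_2, \beta_3)}{\partial(\mu_1, \mu_2, \mu_3)} \bigg|_{\mu = 0} \neq 0,
\end{align*}
which, by the Implicit Function Theorem, makes $(\mu_1, \mu_2, \mu_3)$ a regular reparametrization of $(\beta_1, \beta_2, \beta_3)$ and yields the claimed codimension-3 B--T bifurcation.

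The main obstacle is this transversality determinant. Propagating linear-in-$\mu$ perturbations through the long sequence of transformations in Theorem \ref{stablethmE1}(3) is algebraically intense, and the $\beta_i$ emerge as rational expressions in $a_1, a_2, k, r_1$. I would manage this by noting that at leading order only three channels feed the $\beta_i$: the shift of equilibrium location under the perturbation, the perturbation of the Jacobian entries, and the perturbation of the quadratic coefficients. Keeping only these first-order pieces reduces the $3\times 3$ Jacobian to an expression whose non-vanishing can be tested symbolically in the admissible region $a_1 > r_1$, $a_2 > 1$, $k$ near $1$.
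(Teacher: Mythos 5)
Your proposal is correct in structure and, for its analytic core, follows the same route as the paper: reduce to a two-dimensional center manifold, push the planar system through a chain of near-identity coordinate changes and a time reparametrization to a nilpotent normal form, unfold with three parameters, and close with a transversality (regularity) check on the parameter map. What you do genuinely differently is Step 1: you observe that $k=1$, $v_2=0$ forces $I_m=0$, $I_2=0$ and $I_1=-v_1/u=(a_2-1)/(a_1+a_2)=(a_1-r_1)/a_1$, so the degenerate $E_4$ has collided with $E_1$, and that $v_2=0$ is literally the condition $a_2=a_1(a_1-r_1+1)/r_1$ of Theorem \ref{stablethmE1}(3); both identities check out, so Theorem \ref{BT3E4} is the unfolding of the cusp already found there. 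The paper never makes this connection: it recomputes the center manifold and the entire transformation chain from scratch at $E_4$, but with the perturbation $(a_1,a_2,k)=(a_1+\varepsilon_1,\tfrac{1}{r_1}(a_1+a_1^2-a_1r_1+\varepsilon_1)+\varepsilon_2,1+\varepsilon_3)$ inserted before the reduction, arriving at $\dot y_2=\eta_1y_1+\eta_2y_2+\eta_3y_1^2+\cdots$ and asserting $\partial(\eta_1,\eta_2,\eta_3)/\partial(\varepsilon_1,\varepsilon_2,\varepsilon_3)\neq0$. Your collision observation buys conceptual economy and an internal consistency check between the two theorems; the paper's direct computation buys nothing extra, since (as you note in Step 3) the critical-parameter normal form cannot simply be ``reused'' — the whole chain must be repeated with the $\mu$-dependence retained, which is exactly the paper's computation. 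Be aware that the two decisive facts — nonvanishing of the cubic normal-form coefficients and of the $3\times3$ transversality determinant — are deferred in your write-up; the paper asserts them without displayed computation either, so you match its level of rigor, but a complete proof by either route still owes those verifications (and your list of non-degeneracy conditions should be checked against the specific codimension-3 nilpotent class that actually arises here, where the quadratic coefficient of $y_1^2$ vanishes at criticality and the relevant conditions involve the coefficients of $y_1^3$, $y_1y_2$ and $y_1^2y_2$).
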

	\begin{proof}
		For the equilibrium point $E_4$, when $k = 1$ and $v_2 = 0$, the corresponding Jacobian matrix is as follows:
		\begin{align*}
			J_4(k=1,v_2=0)= \begin{pmatrix}
				\frac{a_1v_1}{u}  & \frac{(a_1+a_2)v_1}{u}  & \frac{(a_1+1)v_1}{u} \\
				0& 0 & 0\\
				0& -\frac{(a_1+a_2)v_1}{u} & 0
			\end{pmatrix}
		\end{align*}
		Then it is obvious that the characteristic matrix has two zero eigenvalues. Consider the following system which translates the equilibrium point to the origin: 
		\begin{align}
			\begin{cases}
				\label{BTE4}
				\frac{\mathrm{d}x}{\mathrm{d}t}&=\frac{a_1v_1}{u}x+\frac{(a_1+a_2)v_1}{u}y+\frac{(a_1+1)v_1}{u}z-(a_1+a_2)xy-(a_1+1)xz \\
				\frac{\mathrm{d}y}{\mathrm{d}t}&=-(a_1+a_2)xy-a_2y^2-(a_2+1)yz \\
				\frac{\mathrm{d}z}{\mathrm{d}t}&=-\frac{(a_1+a_2)v_1}{u}y+(a_1+a_2)xy-z^2 
			\end{cases}
		\end{align}
		
		According to the center manifold theorem, assume that there is a center manifold in the following form:
		\begin{align}
			\label{centerE4}
			x=c_{10}y+c_{01}z+c_{20}y^2+c_{11}yz+c_{02}z^2+O(||(y,z)||^2)
		\end{align}
		Differentiating both sides of the center manifold (\ref{centerE4}) with respect to $t$, we can obtain:
		\begin{align*}
			\frac{\mathrm{d}x}{\mathrm{d}t}&=c_{10}\frac{\mathrm{d}y}{\mathrm{d}t}+c_{01}\frac{\mathrm{d}z}{\mathrm{d}t}+2c_{20}y\frac{\mathrm{d}y}{\mathrm{d}t}+c_{11}y\frac{\mathrm{d}z}{\mathrm{d}t}+c_{11}\frac{\mathrm{d}y}{\mathrm{d}t}z + 2c_{02}z\frac{\mathrm{d}z}{\mathrm{d}t}+\cdots\\
			&=c_{10}\left(-(a_1 + a_2)(c_{10}y + c_{01}z + c_{20}y^2 + c_{11}yz + c_{02}z^2+\cdots)y - a_2y^2 - (a_2 + 1)yz\right)\\
			&+c_{01}\left(-\frac{(a_1 + a_2)v_1}{u}y+(a_1 + a_2)(c_{10}y + c_{01}z + c_{20}y^2 + c_{11}yz + c_{02}z^2+\cdots)y - z^2\right)\\
			&+2c_{20}y\left(-(a_1 + a_2)(c_{10}y + c_{01}z + c_{20}y^2 + c_{11}yz + c_{02}z^2+\cdots)y - a_2y^2 - (a_2 + 1)yz\right)\\
			&+c_{01}z\left(-(a_1 + a_2)(c_{10}y + c_{01}z + c_{20}y^2 + c_{11}yz + c_{02}z^2+\cdots)y - a_2y^2 - (a_2 + 1)yz\right)\\
			&+c_{11}y\left(-\frac{(a_1 + a_2)v_1}{u}y+(a_1 + a_2)(c_{10}y + c_{01}z + c_{20}y^2 + c_{11}yz + c_{02}z^2+\cdots)y - z^2\right)\\
			&+2c_{02}z\left(-\frac{(a_1 + a_2)v_1}{u}y+(a_1 + a_2)(c_{10}y + c_{01}z + c_{20}y^2 + c_{11}yz + c_{02}z^2+\cdots)y - z^2\right)\\
			&+\cdots \\
			&=-\frac{(a_1+a_2)v_1c_{01}}{u}y +y^2\left(-(a_1 + a_2)c_{10}^2 - a_2c_{10}+(a_1 + a_2)c_{10}c_{01}-\frac{(a_1 + a_2)v_1c_{11}}{u}\right)\\
			&+yz\left(-(a_1 + a_2)c_{10}c_{01}-(a_2 + 1)c_{10}+(a_1 + a_2)c_{01}^2-\frac{2(a_1 + a_2)v_1c_{02}}{u}\right)-c_{01}z^2\\
			&+\cdots
		\end{align*}
		Meanwhile, substituting the center manifold (\ref{centerE4}) into the differential equation of the variable $x$ with respect to $t$, we can obtain:
		\begin{align*}
			\frac{\mathrm{d}x}{\mathrm{d}t}=&\frac{a_1v_1}{u}(c_{00}y + c_{01}z + c_{20}y^2 + c_{11}yz + c_{02}z^2+\cdots)\\
			&+(a_1 + a_2)\frac{v_1}{u}y+\frac{(a_1 + 1)v_1}{u}z\\
			&-(a_1 + a_2)(c_{00}y + c_{01}z + c_{20}y^2 + c_{11}yz + c_{02}z^2+\cdots)y\\
			&-(a_1 + 1)(c_{00}y + c_{01}z + c_{20}y^2 + c_{11}yz + c_{02}z^2+\cdots)z\\
			=&\left[\frac{a_1v_1}{u}c_{00}+(a_1 + a_2)\frac{v_1}{u}\right]y+\left(\frac{a_1v_1}{u}c_{01}+\frac{(a_1 + 1)v_1}{u}\right)z\\
			&+y^2\left(\frac{a_1v_1}{u}c_{20}-(a_1 + a_2)c_{00}\right)\\
			&+yz\left(\frac{a_1v_1}{u}c_{11}-(a_1 + a_2)c_{01}-(a_1 + 1)c_{00}\right)\\
			&+z^2\left(\frac{a_1v_1}{u}c_{02}-(a_1 + 1)c_{01}\right)+\cdots
		\end{align*}
		By comparing the coefficients, we can obtain:
		\begin{align*}
			c_{10}&=\frac{a_1+a_2}{a_1^2}, \quad c_{01}=-\frac{a_1+1}{a_1},  \\
			c_{02} &= -\frac{(a_1+1)u}{a_1v_1}, \quad c_{11} = \frac{(a_1+a_2)u}{a_1^4v_1}(2a_1^3+5a_1^2+2a_1+a_2),\\
			c_{20}&=\frac{u}{a_1^5v_1}(2a_1^5+5a_1^4+2a_1^3+10a_1^2a_2+4a_1^4a_2+12a_1^3a_2+2a_1a_2^2+2a_1a_2+a_2^2). 
		\end{align*}
		Denote$(a_1,a_2,k) = (a_1+\varepsilon_1,\frac{1}{r_1}(a_1+a_1^2-a_1r_1+\varepsilon_1)+\varepsilon_2,1+\varepsilon_3)$,and substitute the obtained center manifold (\ref{centerE4}) back into the system (\ref{BTE4}), then we can get the following system:
		\begin{align*}
			\begin{cases}
				\frac{\mathrm{d}y}{\mathrm{d}t}~~=&a_{10}y + a_{01}z + a_{20}y^2 + a_{11}yz + a_{02}z^2 + a_{30}y^3 + a_{21}y^2z + a_{12}yz^2 
				\\&+ a_{03}z^3 + O(\|(y,z)\|^3) \\
				\frac{\mathrm{d}z}{\mathrm{d}t}~~=&b_{10}y + b_{01}z + b_{20}y^2 + b_{11}yz + b_{02}z^2 + b_{30}y^3 + b_{21}y^2z + b_{12}yz^2 
				\\&+ b_{03}z^3 + O(\|(y,z)\|^3)
			\end{cases}	
		\end{align*}
		wherein:
		\begin{align*}
			a_{10} &= O(\varepsilon^2),a_{01} = O(\varepsilon^2),a_{02} = O(\varepsilon^2),a_{03} = O(\varepsilon^2),\\
			a_{20}&=-\frac{1}{r_1}(a_1 + a_1^2 - a_1r_1)-\frac{c_{10}}{r_1}(a_1 + a_1^2)-\frac{1 + c_{10}}{r_1}\varepsilon_1-(1 + c_{10})\varepsilon_2 + O(\varepsilon^2),\\
			a_{11}&=-\frac{1}{r_1}(a_1 + a_1^2 - a_1r_1) - 1-\frac{c_{01}}{r_1}(a_1 + a_1^2)-\frac{1 + c_{01}}{r_1}\varepsilon_1-(1 + c_{01})\varepsilon_2-\varepsilon_3 + O(\varepsilon^2),\\
			a_{30}&=-\frac{(a_1 + a_1^2)c_{20}}{r_1}-\frac{c_{20}}{r_1}\varepsilon_1 - c_{20}\varepsilon_2+ O(\varepsilon^2),
			\\a_{21}&=-\frac{(a_1 + a_1^2)c_{11}}{r_1}-\frac{c_{11}}{r_1}\varepsilon_1 - c_{11}\varepsilon_2+ O(\varepsilon^2)\\
			a_{12}&=-\frac{(a_1 + a_1^2)c_{02}}{r_1}-\frac{c_{02}}{r_1}\varepsilon_1 - c_{02}\varepsilon_2+ O(\varepsilon^2),\\
			b_{10}&=-\frac{(a_1 + a_2)r_1}{u}+\frac{(a_1 + a_2)u_1}{u u_2}\varepsilon_1+ O(\varepsilon^2),b_{01}=\frac{2\varepsilon_1}{u_2}+ O(\varepsilon^2),\\
			b_{20}&=\frac{(a_1 + a_1^2)c_{10}}{r_1}+\frac{c_{10}}{r_1}\varepsilon_1 + c_{10}\varepsilon_2,b_{11}=\frac{(a_1 + a_1^2)c_{01}}{r_1}+\frac{c_{01}}{r_1}\varepsilon_1 + c_{01}\varepsilon_2,
			\\b_{02}&=-\varepsilon_3 + O(\varepsilon^2),
			b_{30}=\frac{(a_1 + a_1^2)c_{20}}{r_1}+\frac{c_{20}}{r_1}\varepsilon_1 + c_{20}\varepsilon_2,
			\\b_{21}&=\frac{(a_1 + a_1^2)c_{11}}{r_1}+\frac{c_{11}}{r_1}\varepsilon_1 + c_{11}\varepsilon_2,b_{12}=\frac{(a_1 + a_1^2)c_{02}}{r_1}+\frac{c_{02}}{r_1}\varepsilon_1 + c_{02}\varepsilon_2,b_{03}= O(\varepsilon^2)
		\end{align*}
		make a variable substitution:
		\begin{align*}
			\begin{cases}
				x_1=z\\
				x_2=b_{10}y
			\end{cases}
		\end{align*}
		then we can obtain:
		\begin{equation}
			\begin{cases}
				\begin{split}
					\frac{\mathrm{d}x_1}{\mathrm{d}t} =~~& b_{01}x_1 + x_2 + b_{02}x_1^2+\frac{b_{11}}{b_{10}}x_1x_2 + \frac{b_{20}}{b_{10}^2}x_2^2 
					\\&+ b_{03}x_1^3+\frac{b_{12}}{b_{10}}x_1^2x_2+\frac{b_{21}}{b_{10}^2}x_1x_2^2+\frac{b_{30}}{b_{10}^3}x_2^3 + O(\|x\|^4)\\
					\frac{\mathrm{d}x_2}{\mathrm{d}t} =~~ &b_{10}a_{01}x_1 + a_{10}x_2 + b_{10}a_{02}x_1^2 + a_{11}x_1x_2 + \frac{a_{20}}{b_{10}}x_2^2 
					\\&+ b_{10}a_{03}x_1^3 + a_{12}x_1^2x_2+\frac{a_{21}}{b_{10}}x_1x_2^2+\frac{a_{30}}{b_{10}^2}x_2^3 + O(\|x\|^4)
				\end{split}
			\end{cases}
		\end{equation}
		Transform the system into a nonlinear oscillator through the following transformation: 
		\begin{align*}
			\begin{cases}
				x_3~~=&x_1\\
				x_4~~=&b_{01}x_1 + x_2 + b_{02}x_1^2+\frac{b_{11}}{b_{10}}x_1x_2 + \frac{b_{20}}{b_{10}^2}x_2^2 + b_{03}x_1^3
				\\&+\frac{b_{12}}{b_{10}}x_1^2x_2+\frac{b_{21}}{b_{10}^2}x_1x_2^2+\frac{b_{30}}{b_{10}^3}x_2^3 + O(\|x\|^4)
			\end{cases}
		\end{align*}
		then the transformed system can be obtained:
		\begin{align*}
			\begin{cases}
				\frac{\mathrm{d}x_3}{\mathrm{d}t}~~=&x_4 \\
				\frac{\mathrm{d}x_4}{\mathrm{d}t}~~=&d_{10}x_3+d_{01}x_4+d_{20}x_3^2+d_{11}x_3x_4+d_{02}x_4^2+d_{30}x_3^3+d_{21}x_3^2x_4+d_{12}x_3x_4^2
				\\&+d_{03}x_4^3+O(||x||^4)
			\end{cases}
		\end{align*}
		\begin{align*}
			d_{10} =& O(\varepsilon^2),d_{01} = b_{01}+O(\varepsilon^2),d_{20} = -\frac{b_{01}b_{11}}{b_{10}} - b_{01}a_{11}+O(\varepsilon^2), \\
			d_{11} =& b_{01}v_{11}+2b_{02}+ a_{11}+\frac{b_{01}b_{11}}{b_{10}} -\frac{2a_{20}b_{01}}{b_{10}}  +O(\varepsilon^2), \\
			d_{02} =& b_{01}v_{02}+\frac{b_{20}+(a_{20}+b_{11})b_{10}}{b_{10}^2}+O(\varepsilon^2), d_{30} = a_{11}v_{20}-a_{12}b_{01}-\frac{a_{11}b_{01}b_{11}}{b_{10}}+O(\varepsilon^2), \\
			d_{21} =& b_{01}v_{21}+a_{11}v_{11}+a_{12}+ 2b_{02}v_{11}+\frac{1}{b_{10}}(5b_{01}b_{12}+b_{01}b_{11}v_{02}-2a_{20}b_{01}v_{11}+2a_{20}v_{20} \\
			&+3b_{02}b_{11}+a_{11}b_{11})+\frac{b_{01}b_{21}}{b_{10}^2}+O(\varepsilon^2) \\
			d_{12}=& b_{01}v_{12}+a_{11}v_{11}+2b_{02}v_{02}+\frac{1}{b_{10}}(b_{01}b_{11}v_{02}-2a_{20}b_{01}v_{12}+2a_{20}v_{11}+a_{21}+b_{01}b_{11}v_{02}+2b_{12})\\
			&+\frac{1}{b_{10}^2}(2b_{20}v_{11}-2b_{01}b_{20}v_{02}+b_{21}-3a_{30}b_{01}+2b_{02}b_{20}+b_{11}^2+a_{20}b_{11}+2a_{11}b_{20}+b_{01}b_{21})\\
			&-\frac{3b_{01}b_{11}b_{20}}{b_{10}^3}\\
			d_{03}=&\frac{(a_{20}+b_{11})2v_{02}}{b_{10}}+\frac{1}{b_{10}^2}(a_{30}+2b_{01}b_{20}v_{02}+b_{21})+\frac{1}{b_{10}^3}(b_{20}b_{11}+2a_{20}b_{20})
		\end{align*}
		In order to eliminate $x_4^2$, make the following transformation:
		\begin{align*}
			\begin{cases}
				x_5=x_3 \\
				x_6=x_4-d_{02}x_3x_4
			\end{cases}
		\end{align*}
		obtain the system:
		\begin{align}
			\label{BT56}
			\begin{cases}
				\frac{\mathrm{d}x_5}{\mathrm{d}t}~~=&x_6 + d_{02}x_5x_6 + d_{02}^2x_5^2x_6+ O(\|x\|^4)\\
				\frac{\mathrm{d}x_6}{\mathrm{d}t}~~=&d_{10}x_5 + d_{01}x_6 + (d_{20}-d_{10}d_{02})x_5^2 + d_{11}x_5x_6+(d_{30}-d_{20}d_{02})x_5^3  \\&+d_{21}x_5^2x_6+(d_{12}-d_{02}^2)x_5x_6^2 - d_{20}d_{03}x_6^3 + O(\|x\|^4)
			\end{cases}
		\end{align}
		Then, perform the following transformation on the system (\ref{BT56}):
		\begin{align*}
			\begin{cases}
				x_7=x_5 \\
				x_8=x_6+ d_{02}x_5x_6 + d_{02}^2x_5^2x_6+ O(\|x\|^4)
			\end{cases}
		\end{align*}
		obtain the system:
		\begin{align}
			\label{BT78}
			\begin{cases}
				\frac{\mathrm{d}x_7}{\mathrm{d}t}~~ =& x_8 \\
				\frac{\mathrm{d}x_8}{\mathrm{d}t}~~ =& e_{00}x_7 + e_{01}x_8 + e_{20}x_7^2 + e_{11}x_7x_8
				\\&+ e_{30}x_7^3 + e_{21}x_7^2x_8+ e_{12}x_7x_8^2 + e_{03}x_8^3+O(\|x\|^4)
			\end{cases}
		\end{align}
		where:
		\begin{align*}
			e_{10}&=d_{10},e_{01}=d_{01},e_{20}=d_{20}-d_{10}d_{02},\\
			e_{11}&=d_{11}-d_{01}d_{02},e_{30}=d_{30}-d_{20}d_{02},\\
			e_{21}&=d_{21}-d_{11}d_{02},e_{12}=d_{12}-d_{02}^2,e_{03}=-d_{20}d_{02}.
		\end{align*}
		In order to eliminate the term $x_7x_8^2$ in the system (\ref{BT78}), make a time parameter transformation $t=(1 - \frac{e_{12}}{2}x_7^2x_8)\tau$, and the following system can be obtained:
		\begin{align*}
			\begin{cases}
				\frac{\mathrm{d}x_7}{\mathrm{d}\tau} ~~=& x_8 - \frac{e_{12}}{2}x_7^2x_8\\
				\frac{\mathrm{d}x_8}{\mathrm{d}\tau}~~ =& e_{10}x_7 + e_{01}x_8 + e_{20}x_7^2 + e_{11}x_7x_8 \\
				&+(e_{30} - \frac{e_{10}e_{12}}{2})x_7^3+ (e_{21} - \frac{e_{01}e_{12}}{2})x_7^2x_8+ e_{12}x_7x_8^2 + e_{03}x_8^3+O(\|x\|^4)
			\end{cases}
		\end{align*}
		Then let:
		\begin{align*}
			\begin{cases}
				x_9=x_7 \\
				x_{10}=x_8-\frac{e_{12}}{2}x_7^2x_8
			\end{cases}
		\end{align*}
		the following system is obtained:
		\begin{align}
			\label{BT910}
			\begin{cases}
				\frac{\mathrm{d}x_9}{\mathrm{d}\tau}~~=& x_{10} \\
				\frac{\mathrm{d}x_{10}}{\mathrm{d}\tau}~~=& f_{10}x_9 + f_{01}x_{10} + f_{20}x_9^2 + f_{11}x_9x_{10}
				\\&+ f_{30}x_9^3 + f_{21}x_9^2x_{10}+  f_{03}x_{10}^3+O(\|x\|^4)
			\end{cases}
		\end{align}
		Perform a scaling transformation on the system (\ref{BT910}):
		\begin{align*}
			x_9=\frac{1}{\sqrt[4]{f_{03}f_{21}} }y_1,\quad x_{10}=\frac{f_{21}^\frac{1}{4} }{f_{03}^\frac{3}{4}}y_2 ,\quad t=\sqrt{\frac{f_{21}}{f_{03}}}\tau.
		\end{align*}
		obtain the system:
		\begin{align}
			\begin{cases}
				\frac{\mathrm{d}y_1}{\mathrm{d}t}=y_2 \\
				\frac{\mathrm{d}y_2}{\mathrm{d}t}=\eta_1 y_1+\eta_2 y_2+\eta_3 y_1^2 +g_{11}y_1y_2+g_{30}y_1^3+y_1^2y_2+y_2^3+O(||y||^4)
			\end{cases}
		\end{align}
		where$\eta_1=\frac{f_{10}f_{03}}{f_{21}},\eta_2=\frac{f_{01}\sqrt{f_{03}}}{\sqrt{f_{21}}},\eta_3=\frac{f_{20}f_{03}^\frac{3}{4}}{f_{21}^\frac{5}{4}},g_{11}=\frac{f_{11}f_{03}^\frac{1}{4}}{f_{21}^\frac{3}{4}},g_{30}=\frac{f_{30}f_{03}^\frac{1}{2}}{f_{21}^\frac{3}{2}}$,:
		\begin{align*}
			\frac{\partial (\eta_1,\eta_2,\eta_3)}{\partial (\varepsilon_1,\varepsilon_2,\varepsilon_3) } \ne 0
		\end{align*}
		Therefore, it can be concluded that when there are small perturbations to the parameters $(\eta_1,\eta_2,\eta_3)$, the system (\ref{BTE4}) undergoes a B-T bifurcation of codimension 3 with $a_1,a_2,k$ as the bifurcation parameters. 
	\end{proof}
	
	\section{Numerical simulation}
	According to Theorem \ref{stablethmE1}, when the invasion reproduction number $R_2 < 1$, the equilibrium point $E_1$ is a stable node. Select the parameters $a_1 = 3$, $r_1 = 2$, $a_2 = 0.5$, $k = 0.5$, and present the graphs of the relationships between the densities of three types of infected populations and time:
	\begin{figure}[H]
		\centering
		\includegraphics[scale=0.6]{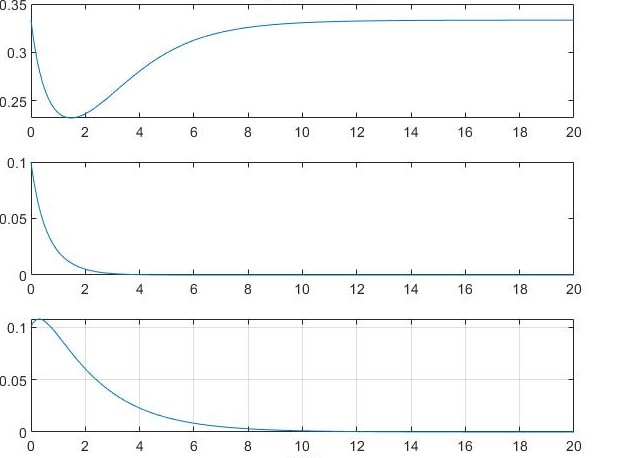}
		\caption{The density change curves of three types of populations when $R_2 < 1$}
		\label{figure}
	\end{figure}
	From the graph, it can be seen that for the given initial values $(0.35, 0.1, 0.1)$, the density of the population $I_1$ infected with the first type of infectious disease will first decrease, then increase, and finally approach a stable value; the density of the population $I_2$ infected with the second type of infectious disease will keep decreasing and finally remain at 0; the density of the population $I_3$ infected with both infectious diseases will first increase slightly, then decrease, and finally remain at 0. This trend of change verifies that when the invasion reproduction number $R_2 < 1$, the equilibrium point $E_1$ is a stable equilibrium point.
	
	Next, conduct a numerical simulation of the bifurcation at the equilibrium point $E_4$. According to Theorem \ref{exist4E4E5}, when $k = k_0\leq1$ and $v_1 < 0$, $v_2 < 0$, there exists an interior equilibrium point $E_4$. And according to Theorem \ref{BT3E4}, when $k = 1$ and $v_2 = 0$, the system undergoes a B - T bifurcation of codimension 3 at this interior equilibrium point. Therefore, when the parameter values $a_1 = 2$, $a_2 = 2$, $r_1 = 1.5$ are selected, the system (\ref{BTE4}) has the following phase diagram:
	\begin{figure}[H]
		\label{E-4B-T3}
		\centering
		\includegraphics[scale=0.5]{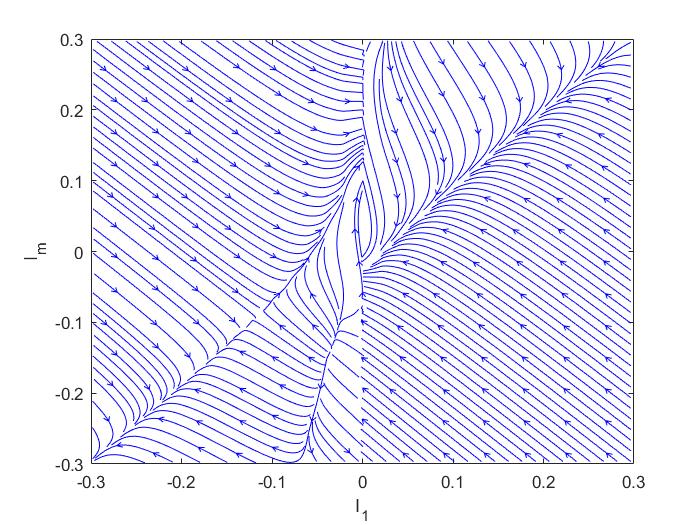}
		\caption{The phase diagram of the system near $E_4$ when $k = 1$ and $v_2=0$}
	\end{figure}
	
	\section{Conclusions}
	Based on the traditional infectious disease model, this paper constructs an infectious disease co-infection model under complex circumstances, and analyzes the relevant contents of this model through the qualitative theory and bifurcation theory in the dynamical system. Firstly, by means of analysis, the parameter conditions for the existence of the disease-free equilibrium point, boundary equilibrium points and interior equilibrium points are given. Secondly, the stability of each boundary equilibrium point is analyzed correspondingly, and the stability of the interior equilibrium point $E_4$ is presented under special conditions. Finally, at the interior equilibrium point $E_4$, when the parameter $v_1 = 0$, the system undergoes a saddle-node bifurcation at this point; when taking $(a_1,a_2,k)$ as parameters and making small perturbations to these parameters, the system will experience a B-T bifurcation of codimension 3. 
	
	\newpage
	\bibliographystyle{plain}
	
\end{document}